\newcommand{\be}{\begin{equation}}
\newcommand{\ee}{\end{equation}}
\newtheorem{prop}{Proposition}
\newtheorem{definition}{Definition}
\newcommand{\Ab}{B}
\newcommand{\Ac}{C}
\newcommand{\Ad}{D}
\newcommand{\Ae}{E}
\newcommand{\Af}{F}
\newcommand{\hpartial}{\hat{\partial}}
\numberwithin{equation}{section}
\author{{\bf Alex S. Arvanitakis} \\ {\tt a.arvanitakis@imperial.ac.uk}\\ \\\small \em The Blackett Laboratory,\\
\small \em Imperial College London, \\
\small \em Prince Consort Road London SW7 2AZ, U.K.}
\title{ \vspace{-3cm}{\hfill \small Imperial-TP-2018-ASA-02}\newline
\bf Brane Wess-Zumino terms from AKSZ and exceptional generalised geometry as an $L_\infty$-algebroid}
\begin{document}

\titlepage

\maketitle

\abstract{We reinterpret the generalised Lie derivative of M-theory $E_6$ generalised geometry as hamiltonian flow on a graded symplectic supermanifold. The hamiltonian acts as the nilpotent derivative of the tensor hierarchy of exceptional field theory. This construction is an M-theory analogue of the Courant algebroid and reveals the $L_\infty$-algebra underlying the tensor hierarchy.

The AKSZ construction identifies that same hamiltonian with the lagrangian of a 7-dimensional generalisation of Chern-Simons theory that reduces to the M5-brane Wess-Zumino term on 5-brane boundaries. The exercise repeats for the type IIB $E_5$ generalised geometry and we discuss the relation to the D3-brane.}

\tableofcontents

\section{Introduction}

In this paper we first clarify certain intricacies in the gauge structure of the ``extended geometries'' relevant for U-duality \cite{Hull:1994ys} in string/M-theory --- meaning both exceptional generalised geometry (EGG) \cite{Hull:2007zu,Pacheco:2008ps,Aldazabal:2013mya} and exceptional field theory (EFT) \cite{
Berman:2010is, Berman:2011jh,Berman:2011cg, Berman:2012vc, Hohm:2013vpa, Hohm:2013uia, Hohm:2014fxa, Hohm:2015xna, Abzalov:2015ega, Musaev:2015ces, Berman:2015rcc} --- and then explain the unexpected connection to brane physics.

It is known that the generalised Lie derivative or Dorfman bracket which contains at the same time both infinitesimal diffeomorphisms and $n$-form gauge transformations of the 10- and 11-dimensional supergravities (theories which are low-energy limits of string/M-theory) fails to satisfy the Jacobi identity, and neither is it antisymmetric; instead, it only satisfies the much weaker identity \eqref{loday}. Algebraically speaking the Dorfman bracket is thus not a Lie algebra bracket. It has been known for a while in the mathematical literature \cite{Baraglia:2011dg,lupercio2014t} that the algebraic structure of the exceptional Dorfman bracket is instead that of an $L_\infty$-algebra \cite{Lada:1992wc}, and while this paper was being written the $L_\infty$-algebra was first related to the ``tensor hierarchy'' of EFT in the physics literature \cite{Cederwall:2018aab}. (For the Courant bracket, relevant for the usual $O(d,d)$ generalised geometry, the $L_\infty$ structure has been known for decades \cite{roytenberg1998courant}, see also \cite{Deser:2016qkw,Hohm:2017pnh}.)

Here, the same $L_\infty$-algebra structure is obtained as the the $L_\infty$-algebra canonically associated \cite{Ritter:2015ffa} to a dg-symplectic manifold $\mathcal M$, (also known as an $L_\infty$-algebroid \cite{Sati:2009ic}, see Definition \ref{dgsymplecticmanifold}) encoding the generalised Lie derivative. What this means is that there is a way to rewrite the exceptional generalised Lie derivatives for both type IIB and M-theory EGGs  as derived Poisson brackets (in the sense of Kosmann-Schwarzbach \cite{kosmann1996poisson} and Voronov \cite{voronov2005higher}) involving a certain odd hamiltonian function $\Theta$ on the symplectic graded manifold $\mathcal M$, such that $(\Theta,(\Theta,\--))=((\Theta,\Theta),\--)/2=0$; in other words the generalised Lie derivative $L_A A'$ is reinterpreted as infinitesimal hamiltonian flow. Explicitly,
\be
L_A A' \sim - \big((\Theta,A),A'\big)\,.
\ee
where sections of the generalised tangent bundle have been identified with functions of some fixed degree on $\mathcal M$.
This is an M-theory/U-duality analogue of the result of Roytenberg \cite{Roytenberg:2002nu,Roytenberg:2006qz} which characterises Courant algebroids \cite{courant1990dirac,liu1995manin,vsevera2015poisson} as dg-symplectic manifolds of degree 2. We find that for the exceptional Dorfman bracket one instead has dg-symplectic manifolds of degrees 6 or 4, depending on whether we consider an M-theory or type IIB kind of construction respectively.

The reason this is interesting is that \emph{the only input that goes into these constructions is the content of the exceptional tangent bundle of EGG}, i.e. just the content of $R_1$ from Table \ref{udualitytable}; once that is known, the candidate expression for the Dorfman bracket is fixed (in the absence of twists). After checking the correctness of the Dorfman bracket thereby defined (in Propositions \ref{genLieMtheory} and \ref{genLieIIB}), we obtain from known mathematical results \cite{Ritter:2015ffa,getzler2010higher,fiorenza2007structures} the $L_\infty$-algebra structure of the Dorfman bracket, for which the ``lowest'' component consists of generalised vectors, and whose higher components are explicitly shown to correspond to the higher modules $R_2,R_3\dots$ forming the ``tensor hierarchy'' \cite{Aldazabal:2013via,Berman:2012vc,Hohm:2015xna,Wang:2015hca} of EFT. The hamiltonian vector field $(\Theta,\--)$ is simply the well-known EFT nilpotent derivative operator $\hat \partial$ \cite{Cederwall:2013naa}, squaring to zero because $(\Theta,\Theta)=0$ (not an identity, because $\Theta$ is odd). As a further application, we classified all possible twists for the Dorfman bracket simply by writing down the most general expression for $\Theta$ consistent with $(\Theta,\Theta)=0$. Given the relation between $\Theta $ and the EFT derivative $\hat \partial$, this illustrates the hitherto-unexplored possibility of twisting the EFT tensor hierarchy chain complex.

 Furthermore we find that the same odd hamiltonians can be connected through the AKSZ construction \cite{alexandrov1997geometry} to the physics of certain branes: they turn out to define a ``topological'' subsector of the M5-brane \cite{Townsend:1995af} and D3-brane \cite{Tseytlin:1996it} lagrangians, specifically the Wess-Zumino (WZ) term that describes the coupling to a supergravity gauge field background. 
  The relation of those AKSZ topological field theories to brane physics is a direct generalisation of the known relation of the ``Courant'' sigma model (named for the relation to the Courant bracket, but actually written down by Ikeda \cite{Ikeda:2002wh}) to the string WZ term.

 That physical interpretation involves putting the respective AKSZ sigma models on 7- and 5-dimensional manifolds with boundary, which is identified with the brane worldvolume. Careful examination of consistent boundary conditions for the variational problem leads near-inevitably to the introduction of the correct worldvolume gauge fields, including the chiral 2-form on the M5, although it must be said that the M5-brane case is rather more clear-cut than the D3-brane one. Remarkably, one form of the M5-brane WZ term we derive has been found before by Kalkkinnen and Stelle \cite{Kalkkinen:2002tk} from a careful analysis of large gauge transformations in M-theory. It is in turn also related to the ``Hopf-WZ'' term of Intriligator \cite{Intriligator:2000eq}, introduced on the worldvolume of a probe M5 in an M5-brane background from anomaly-matching considerations.

 Before we begin reviewing material we mention some work which is perhaps not a direct ancestor of this paper but is nevertheless relevant. Tensor hierarchies (different from the EFT one considered here) have been connected to $L_\infty$-algebras before, by at least two different groups \cite{Palmer:2013pka,Lavau:2014iva}. There have been a number of papers proposing derived bracket structures for DFT \cite{Deser:2014mxa,Deser:2016qkw,Heller:2016abk}, two for EGG \cite{lupercio2014t} \cite{Baraglia:2011dg} and  one for EFT \cite{Cederwall:2018aab}; in the last two one also finds the EGG/EFT $L_\infty$-algebra structure respectively in the M-theory case (although none exhibit the $L_\infty$-\emph{algebroid} or dg-symplectic structure which is the main point of this paper). On topological sigma models in the context of extended geometries we mention the works \cite{Heller:2016abk,Chatzistavrakidis:2018ztm,Chatzistavrakidis:2015vka} in which generalisations of the Courant sigma model are used to study non-geometry in string theory (which strongly suggest appropriate generalisations of the theories developed in this paper would be relevant for non-geometry in M-theory), and also \cite{Fiorenza:2013nha} wherein WZ terms for various branes (including the M5) are found from $L_\infty$-algebraic generalisations of the super-Poincar\'e algebra (in a manner reminiscent of \cite{Townsend:1997wg} and references therein). 

Finally, after this paper first appeared online we were informed that the $E_6$ construction had been written down before, in unpublished work \cite{DeserSaemannUnpublushed}.

\subsection{Exceptional generalised geometry and exceptional field theory}
Let us recall the setting of extended geometries in the above sense. Ten-dimensional type IIB and the eleven-dimensional supergravity describe the physics of IIB string theory and M-theory respectively in a low-energy limit. In both supergravity theories there are of course metric degrees of freedom but also various kinds of matter degrees of freedom; besides the fermion sector (which we will not deal with in any way here) there is a menagerie of other fields: the NSNS and RR gauge fields in IIB and the 3-form in 11D. In lower-dimensional supergravities one gets even more fields. For those lower-dimensional supergravities arising from compactification of 11D supergravity on a $d$-torus, these fields organise themselves into representations of the U-duality groups \cite{Cremmer:1979up}. Extended geometry can be somewhat loosely defined as the programme where all of these matter degrees of freedom as well as the metric are collectively described in terms of a bigger ``generalised metric'' structure, and the combined infinitesimal diffeomorphisms and $q$-form gauge transformations are compactly encoded inside a ``generalised Lie derivative'' (sometimes also known as a ``Dorfman bracket''). Merits of the programme  include
\begin{enumerate}
	\item that it provides more powerful formulations of supergravity backgrounds: supersymmetry conditions are most naturally expressed in this context, and have suggested e.g. vast generalisations of Calabi-Yau manifolds \cite{Hitchin:2004ut,Ashmore:2015joa};
	\item and that it generally leads to (both T- and U-) duality-invariant descriptions for physics, with obvious applications; for instance, to ``T-folds'' \cite{Hull:2004in,Hull:2006va} i.e. locally or globally nongeometric spaces with T-duality transition functions, and to the ``exotic branes'' that often give rise to them (see e.g. \cite{deBoer:2010ud}).
\end{enumerate}

Actually, each of the two applications happens to be associated with the each of the two approaches to extended geometry for U-duality we mentioned previously: EGG and EFT respectively. They each in turn generalise the corresponding notions for T-duality: generalised geometry \cite{Hitchin:2004ut,Gualtieri:2003dx} and double field theory (DFT) \cite{Hull:2009mi}). We will have to consider all of those but will generally focus on the exceptional constructions.

Take for definiteness an 11D supergravity background of the form
\be
P\times M\,.
\ee We will mostly concern ourselves with the ``internal'' manifold $M$ with $\dim M=d$. This will be the (compact) space along which T- and U-dualities act. $P$ is the ``external'' space of dimension $n$ ($n+d=11$ for M-theory), spectating the dualities. In the EGG construction for M-theory (i.e. for 11D supergravity) one extends the tangent bundle $TM$ to the ``generalised tangent bundle''
\be
E:=TM \oplus\Lambda^2 T^\star M \oplus \Lambda^5 T^\star M\,,
\ee whose ``generalised vector'' sections include a diffeomorphism generator $v \in \Gamma(TM)$ as well as gauge parameters $\omega \in \Gamma(\Lambda^2 T^\star M),\sigma \in \Gamma(\Lambda^5 T^\star M)$ for the 3-form potential and its magnetic dual. For $2\leq d \leq 6$ the fibres of $E$ form a certain representation $R_1$ of $E_d\times \mathbb{R}^+$, where $E_{d\geq 6}$ is the split real form of the exceptional Lie group $E_d$ and is otherwise given by the table; ``generalised tensors'' are  sections of vector bundles carrying other representations of $E_d\times \mathbb{R}^+$, and generalised vectors act on them by generalised Lie derivatives. In EGG one discards $P$ and considers $M$ alone, which corresponds to a subsector of the fields in 11D supergravity.

\begin{table}[h]\centering
\begin{tabular}{|c|c|c|c|c|c|c|c|c|}
\hline
$n = 11-d$ & $d$ &  $ E_d$ & $R_1$ & $R_2$ &$R_3$ &$R_4$ &$R_5$ &$R_6$ \\ \hline
9 & 2 & $SL(2)\times \mathbb{R}^+$  & $\mathbf{2_1} \oplus \mathbf{1_{-1}}$  & $ \mathbf{2_0} $  & $\mathbf{1_1}$ & $\mathbf{1_0}$ & $\mathbf{2_1}$ & $\mathbf{2_0}\oplus \mathbf{1_2}$   \\
8 & 3 & $SL(3)\times SL(2)$   & $ (\mathbf{3,2})$ & $(\mathbf{\bar 3,1})$  &$\mathbf{(1,2)}$ &$\mathbf{(3,1)}$ &$\mathbf{(\bar 3,2)}$ &$\ast$  \\
7 & 4 &  $SL(5)$  & $ \mathbf{10}$& $\mathbf{\bar{5}}$ &$\mathbf{5}$& $\mathbf{\overline{10}}$  &$\ast$ \\
6 & 5 & ${\rm Spin}(5,5)$   & $\mathbf{16}$ & $\mathbf{10}$ &$\mathbf{\overline{16}}$ &$\ast$  \\
5 & 6 & $E_6$   & $\mathbf{27}$& $\mathbf{\overline{27}}$ &$\ast$ \\
4 & 7 &  $E_7$   & $\mathbf{56}$ & $\ast$  \\
3 & 8 & $E_8$  & $\mathbf{248}$   \\
\hline
\end{tabular}
\caption{U-duality groups and corresponding EFT tensor hierarchy representations. $n$ is the external dimension while $d$ is the internal dimension of M-theory sections (so $d-1$ is the internal dimension for IIB sections, and $\dim R_1$ is the dimension of the extended internal space $\tilde M$ of EFT). $\ast$ marks the end of the tensor hierarchy.}
\label{udualitytable}
\end{table}
EFT on the other hand goes further in two ways: first by extending the base, replacing $M$ with a bigger manifold $\tilde M$ (such that $T\tilde M = E$, loosely speaking, implying $\dim \tilde M=\dim R_1$), and then by considering both $\tilde M$ and $P$ at the same time, providing a reformulation of the full 11D supergravity. Consistency of the theory however implies that one must constrain all fields to depend on at most $d$ coordinates (``(strong) section condition''). Once a section is chosen and identified with $M$ one makes contact with the setting of EGG as described above. Sections come in two inequivalent types: if we fix the U-duality group to $E_d$, M-theory sections are $d$-dimensional and lead to the M-theory EGG construction described in the previous paragraph, while type IIB sections are $(d-1)$-dimensional and lead to the IIB EGG construction. A remaining difference between EFT and EGG however is that because EFT takes the external space $P$ into account, there appear a series of fields --- $q$-forms on $P$ taking values in ``generalised tensors'' on $M$ transforming in the $R_q$ representation of $E_d$\footnote{\emph{Except} for $q=1$: what is usually called the $R_1$ field is not a generalised tensor; rather, it transforms as a Yang-Mills gauge potential on external space with ``gauge group'' the generalised diffeomorphisms of internal space. We will not continue to point out this subtlety later on.} --- which have hitherto not played a role in EGG. Together, these form the ``tensor hierarchy'' \cite{Berman:2012vc,Hohm:2015xna}. The relevant representations $R_q$ are also listed in the table.

The EFT tensor hierarchy considered here is specifically the one which forms a chain complex under the connection-free EFT nilpotent derivative operator $\hat \partial$ \cite{Cederwall:2013naa} ($\hat \partial^2=0$):
\be
R_1 \overset{\hat \partial}{\leftarrow} R_2\overset{\hat \partial}{\leftarrow} \dots \overset{\hat \partial}{\leftarrow} R_{8-d}\qquad (2\leq d \leq 6)\,.
\ee
$\hat \partial$ between these modules is covariant under the generalised Lie derivative and is defined without the introduction of a connection, which is analogous to how the exterior derivative $d$ is covariant under the usual Lie derivative. More precisely this is true if the $\mathbb{R}^+$ weight (in $E_d\times \mathbb{R}^+$) of the field in $R_q$ is \cite{Wang:2015hca}
\be
\label{godweight}
\frac{q}{d-2}=:-q\omega\,.
\ee

It is regrettable we had to introduce so much jargon in so little a section, however one of the aims of this paper is to clarify how a number of these notions fit together and thus it is unavoidable to refer to these ideas. We refer to the original EFT and EGG literature for less terse explanations. Notation and terminology is generally uniform within each. We will use the conventions of \cite{Ashmore:2015joa} for EGG (especially appendix E) and \cite{Arvanitakis:2018hfn} and \cite{Arvanitakis:2016zes} for EFT and DFT respectively.

\subsection{An $O(d,d)$ story}
It is instructive to summarise the story we will describe in the rest of the paper in the technically simpler context of T-duality, ordinary (as opposed to exceptional) generalised geometry, double field theory, and Courant algebroids. The relation to graded symplectic supermanifolds and $L_\infty$-algebras is not original (see e.g. the review \cite{Roytenberg:2006qz}) but the link to the tensor hierarchy in double field theory \cite{Hohm:2013nja} is.

Generalised geometry in the sense of \cite{Hitchin:2004ut,Gualtieri:2003dx} involves extending the tangent bundle $TM$ of a $d$-dimensional manifold $M$ to the generalised tangent bundle $TM\oplus T^\star M$. Generalised vectors are sections $A$ of $TM\oplus T^\star M$. They encode infinitesimal diffeomorphisms on $M$ as well as the 1-form gauge parameters for the (locally-defined) string theory NSNS $B$-field with closed 3-form field strength $H\in \Lambda^3 T^\star M$. The generalised Lie derivative or Dorfman bracket of two sections $A,A'$ of $TM\oplus T^\star M$ is another section
\be
\label{genlieodd}
L_A A':= (\mathcal L_v v', \mathcal L_v \lambda'- \iota_{v'} d\lambda)\qquad (A=(v,\lambda)\,, v\in \Gamma(TM)\,, \lambda \in \Gamma(T^\star M))
\ee
where $\mathcal L_v$ is the ordinary Lie derivative or Lie bracket of vector fields ($\mathcal L_v v':=[v,v']$) and $\iota_v$ is the contraction or interior derivative with $v$. It satisfies a not-quite-Jacobi identity
\be
\label{loday}
L_{A_1} L_{A_2} A_3=L_{L_{A_1} A_2} A_3+ L_{A_2} L_{A_1} A_3
\ee
and thus the space of generalised vectors forms a ``Leibniz'' or ``Loday algebroid'' (see e.g. \cite{Baraglia:2011dg}). This identity is \emph{not} equivalent to the Jacobi identity because $L_{A_1}A_2$ is not antisymmetric in $1\leftrightarrow 2$; rather, we have
\be
\label{symmgenlieodd}
L_{A_1} A_2 + L_{A_2} A_1=d(\iota_{v_1} \lambda_2 + \iota_{v_2}\lambda_1)\,.
\ee
In the right-hand side we encounter the natural $O(d,d)$-invariant metric $\langle \--,\--\rangle:T_xM\oplus T_x^\star M\to \mathbb{R}$ on the fibres of $TM\oplus T^\star M$ at each point $x \in M$. Along with the ``Courant bracket'' $[A,A']_C:=(L_A A'-L_{A'} A)/2$ and the ``anchor'' $a:TM\oplus T^\star M \to TM$ it makes $(M,TM\oplus T^\star M,[\--,\--]_C,\langle \--,\--\rangle,a)$ into a ``Courant algebroid'' as introduced in \cite{liu1995manin}. It is easy to show using the axioms defining Courant algebroids that one can always obtain a Dorfman bracket satisfting \eqref{loday} from the Courant bracket.

The above example is in fact an exact Courant algebroid in the sense that the sequence
\be
0 \to T^\star M\overset{a^T}{\to} TM\oplus T^\star M\overset{a}{\to} TM \to 0
\ee
is an exact sequence of vector bundles. Exact Courant algebroids are classified by the third de Rham cohomology class of $M$ \cite{vsevera2015poisson}. They are all of the above form up to isomorphism except for the Courant and Dorfman brackets: an isomorphism singles out a closed 3-form $H$ representing the class in $H^3(M,\mathbb{R})$ and the generalised Lie derivative turns into the ``twisted'' generalised Lie derivative
\be
L_A A':= (\mathcal L_v v', \mathcal L_v \lambda'- \iota_{v'} d\lambda+\iota_{v'}\iota_{v} H)\,.
\ee
One also says the generalised Lie derivative is twisted by $H$. The Jacobi-esque identity \eqref{loday} is satisfied by virtue of $dH=0$, as are the Courant algebroid axioms.

In the double field theory picture \cite{Hull:2009mi} one has the same story as far as the purely internal sector is concerned except the role of the T-duality group $O(d,d)$ is front-and-centre. One can rewrite the untwisted generalised Lie derivative in terms of $O(d,d)$ tensors. First let $M=1,2,\dots 2d$ denote a fundamental $O(d,d)$ index. Generalised vectors are thus written $A^M$. The $O(d,d)$-invariant metric on $TM\oplus T^\star M$ is denoted $\eta_{MN}$ and 
\be
A_1^M A_2^N\eta_{MN}=\langle A_1,A_2\rangle\,, \qquad \eta_{MN}=\begin{pmatrix}0_d& 1_d\\ 1_d & 0_d \end{pmatrix}\,,\quad A^M=(v^\mu,\lambda_\nu)\,.
\ee
The anchor map $a$ in this notation is simply $a^\mu_M A^M=v^\mu$. One can write $\partial_M:= a_M^\mu\partial_\mu$ in terms of which the untwisted\footnote{In the presence of a nontrivial twist $H$, one can still use the generalised Lie derivative in this untwisted form by instead twisting the generalised vectors by a gerbe \cite{Hull:2014mxa}. This is convenient in EFT because the twists haven't yet been written in an $E_d$ covariant form. Untwisted vectors have twisted Lie derivatives and vice versa. For DFT on the other hand a proposal for the twisted D-bracket is in section 6.4 of Deser and S\"amann \cite{Deser:2016qkw}.} generalised Lie derivative is
\be
(L_A A')^M=  A^N\partial_N A'\,^M-A'\,^N \partial_N A^M + \eta^{MN}\eta_{PQ}\partial_N A^P A'\,^Q \,.
\ee
The identity \eqref{symmgenlieodd} is now
\be
\label{symmgenlieodddft}
(L_A A')^M +(L_{A'} A)^M=\eta^{MN}\partial_N(\eta_{PQ} A^P A'\,^Q)
\ee
which will be important in the following.

In DFT one now proceeds by ``forgetting'' which manifold $M$ the vector bundle $TM\oplus T^\star M$ arose from, so $\partial_M$ is formally a partial derivative on a doubled space $\tilde M$ of dimension $2d$. Consistency, including the identity \eqref{loday}, requires the ``section condition'' or ``strong constraint''
\be
\eta^{MN}\partial_M\otimes \partial_N=0
\ee
so $\partial_M$ lies on a null subspace for the $O(d,d)$ metric $\eta_{MN}$ at each point. As these are always $d$-dimensional, at most $d$ components of $\partial_M$ are nonzero and generalised tensors are interpreted as (local) sections of ordinary vector bundles on that $d$-dimensional manifold. One thus views the generalised geometry construction on $M$ described above as this DFT construction on $\tilde M$ with a preferred $d$-dimensional section $M$. This only makes sense locally --- $\tilde M$ currently lacks a sensible description outside of local coordinate patches --- but for a lot of physical applications that is good enough.

While one can reformulate the common or purely NS sector of both 10-dimensional type II supergravities as the $O(10,10)$ DFT, to make touch with the exceptional field theory construction we will only double $d<10$ coordinates as in \cite{Hohm:2013nja}. The 10-dimensional spacetime is thus written as $P\times M$ with external space $P$ and internal space $M$ (of dimension $d$), or $P\times \tilde M$ in the DFT picture. In this case it is easy to handwave where the tensor hierarchy comes from. Writing the 10-dimensional dynamics in a ``Kaluza-Klein-esque'' split $P\times M$ (but without reducing on $M$) produces $d$ 1-form gauge fields $A_i{}^\mu$ on $P$ from the Kaluza-Klein ansatz on the 10-dimensional metric. These are completed with modes arising from the $B$-field into the $2d$ 1-forms $A_\mu{}^M$ on $P$. As we have not actually reduced on $M$, $\partial_M \neq 0$ and the failure of the Jacobi identity for the Courant bracket implies that the obvious field strength
\be
2 \partial_{[i} A_{j]}^M - [A_i,A_j]^M_C
\ee
will fail to be a generalised tensor. As identity \eqref{symmgenlieodddft} suggests, this failure is by terms of the form $\eta^{MN}\partial_N(\cdots)$ and can be cured by introducing a term $\eta^{MN}\partial_N B_{ij}$ to the field strength. As it happens, the 2-form $B_{ij}$ (which is indeed related to the 10-dimensional string theory $B$-field) has a perfectly sensible field strength without the introduction of additional fields. The $O(d,d)$ representations in the tensor hierarchy are thus $R_1=\mathbf{2d}, R_2=\mathbf{1}$ and the hierarchy reads
\be
\mathbf{2d} \overset{\hat \partial}{\leftarrow} \mathbf{1}\,, \qquad (\hat \partial B)^M:=\eta^{MN}\partial_N B
\ee
where we have read off the form of the ``nilpotent'' derivative $\hat \partial$, whose EFT counterpart is genuinely nilpotent (we dropped the external space $P$ indices $ij$).

To complete the analogy with the EFT tensor hierarchy, introduce the product $ A_1\bullet A_2= A_1^M A_2^N\eta_{MN}$ mapping $R_1\otimes R_1$ into $R_2$. The right-hand side of the symmetric part of the generalised Lie derivative \eqref{symmgenlieodddft} is then $\hat\partial(A\bullet A')$ and the  generalised Lie derivative $L_A B$ is simply $L_A B=A \bullet\hat \partial B$. If we also define $A\bullet B=B\bullet A=0$ we can write this in a form reminiscent of Cartan's magic formula:
\be
\label{oddmagicformula}
L_A B=A \bullet\hat \partial B + \hat\partial(A\bullet B)\,.
\ee

It is striking here that while the derivation of the tensor hierarchy in this case was motivated by considerations involving the external space $P$, the tensor hierarchy itself seems to be independent of the details of $P$. One could have arrived at the same result by considering how the generalised Lie derivative should act on the $O(d,d)$ invariant $A_1^M A_2^N\eta_{MN}$ and writing that in terms of a magic formula. Indeed the EFT tensor hierarchy can and has been derived this way (as in e.g. \cite{Hohm:2015xna}). Given that, it is no surprise we will find a way to derive the same using only the generalised Lie derivative, but what we will do is fully systematic and yields more information --- the $L_\infty$-algebra structure, and the canonically associated topological field theory, to name just two things.

\subsubsection{The dg-symplectic geometry picture, $L_\infty$, AKSZ and the string WZ coupling}

One can derive all of the above in a near-mechanical manner by first reformulating the generalised Lie derivative as a derived bracket on a certain graded symplectic supermanifold. This construction is identical to the one of \cite{Roytenberg:2006qz}.  Let $x^\mu$ be a local coordinate on $M$, then take the graded supermanifold $\mathcal M$ with local homogeneous coordinates:
\begin{align}
\label{oddgradedmanifolddegrees_odd}
\begin{matrix}
&(x^\mu\,, &\psi^\mu\,, &\chi_\mu\,, &p_\mu)\,. \\
\deg&0 &1 &1 &2
\end{matrix}
\end{align}
$\deg$ denotes the degree of a local coordinate, and the degree assignments are compatible with supermanifold parity in the sense $ab=(-1)^{(\deg a)(\deg b)} ba$. (Appendix \ref{appendix:gradedsupermanifolds} has details on definitions and conventions for graded symplectic supermanifolds.)

The space of functions $C^\infty(\mathcal M)$ is a direct sum of functions with definite degrees, some of which can be identified with vector bundles on $M$.  At degree zero we simply have $C^\infty(M)$, while the general function $A$ of degree 1 is
\be
A=v^\mu(x) \chi_\mu + \lambda_\mu(x) \psi^\mu
\ee
and is thus identified with a section of $TM\oplus T^\star M$, i.e. a generalised vector. If we then assign the Poisson brackets
\be
(x^\mu,p_\nu)=-(p_\nu,x^\mu)=\delta_\mu^\nu\,,\qquad (\psi^\mu,\chi_\nu)=(\chi_\nu,\psi^\mu)=\delta_\mu^\nu
\ee
a short calculation using the graded Leibniz rule for the Poisson bracket implies that the generalised Lie derivative $L_A A'$ \eqref{genlieodd} can be obtained from the following derived bracket expression with hamiltonian $\Theta:=p_\mu \psi^\mu$ of degree 3 (we use the same symbols for generalised vectors and the corresponding functions on $C^\infty (\mathcal M)$):
\be
-((\Theta,A),A')=\left(v^\mu\partial_\mu v'\,^\nu-v'\,^\mu\partial_\mu v^\nu\right)\chi_\nu + \left(v^\mu\partial_\mu\lambda'_\nu+\partial_\nu v^\mu\lambda'_\mu -2v'\,^\mu \partial_{[\mu}\lambda_{\nu]} \right)\psi^\nu\,.
\ee
In terms of the hamiltonian vector field $X_f=(f,\--)$ for any function $f$ in $C^\infty(\mathcal M)$ \eqref{poissondef} we express $L_A A'$ compactly as $ -X_{(\Theta,A)}\cdot A'$.

The identities we previously displayed can now be obtained using the graded Jacobi identity \eqref{poissonjacobi} of the Poisson bracket as well as its commutativity properties. For instance the symmetrised derivative identity \eqref{symmgenlieodd} is equivalent to
\be
((\Theta,A),A')+((\Theta,A'),A)=(\Theta,(A,A'))\,.
\ee
Less trivially and more importantly, the not-quite-Jacobi identity \eqref{loday} defining the Leibniz algebroid structure of the generalised Lie derivative of generalised vectors follows from \eqref{pbhomomorphism} and
\begin{align}
\label{lodayproof1}
((\Theta,A_1),(\Theta,A_2))&=&\Big(\big((\Theta,A_1),\Theta\big),A_2\Big) +&\Big(\Theta,\big((\Theta,A_1),A_2\big)\Big)\\
\label{lodayproof2}
&=&\Big(\big((\Theta,A_1),\Theta\big),A_2\Big) +&\Big(\Theta,\big(\Theta,(A_1,A_2)\big)\Big) -\Big(\Theta,\big((\Theta,A_2),A_1\big)\Big)\,.
\end{align}
The last term acts on $A_3$ as $-L_{L_{A_2}A_1} A_3=+L_{L_{A_1}A_2} A_3$ while the first two terms will vanish if $X_\Theta^2=0$. Given that $\deg \Theta=3=2+1$ this in turn follows from \eqref{pbhomomorphism}:
\be
X_\Theta^2=\frac{1}{2} X_{(\Theta,\Theta)}\,.
\ee
Since $\Theta=p_\mu \psi^\mu$ we trivially have $(\Theta,\Theta)=0$.

If $B$ is a function of degree 0 we also recover the anchor map as $-X_{(\Theta,A)}\cdot B$. This can be rewritten in a form analogous to the magic formula \eqref{oddmagicformula}:
\be
-((\Theta,A),B)=-((\Theta,B),A)
\ee
and we observe the direct correspondence between the bullet product $\bullet$ and the Poisson bracket and also between $\hat \partial$ and $X_\Theta$ which will persist in the exceptional case.

\bigskip

We have therefore encoded the Dorfman bracket and anchor in terms of a graded symplectic supermanifold $\mathcal M$ with a compatible ``homological'' vector field $X_\Theta$ (i.e. $\deg X_\Theta=1$ ands $X_\Theta^2=0$), or a ``${\rm dg}$-symplectic manifold''. In this case $\mathcal M=T^\star[2]T[1]M$ (see appendix \ref{appendix:gradedsupermanifolds} for the notation), i.e. a degree-shifted version of $T^\star TM$ with its natural graded symplectic structure $\omega$ of degree 2, written in Darboux coordinates as $-\omega=dx^\mu dp_\mu + d\psi^\mu d\chi_\mu$. It is not difficult to see that the most general $\Theta$ that is compatible with $(\Theta,\Theta)=0$ is
\be
\Theta=p_\mu\psi^\mu + \frac{1}{6} H_{\mu\nu\rho}(x)\psi^\mu\psi^\nu\psi^\rho
\ee
with $\partial_{[\mu} H_{\nu\rho\sigma]}=0 \iff dH=0$. Since $(\Theta,\Theta)=0$ implies the Courant algebroid axioms we have therefore recovered the most general exact Courant algebroid. In fact, it was shown in \cite{Roytenberg:2002nu} that \emph{any} Courant algebroid structure on a vector bundle $V$ corresponds to a degree 2 dg-symplectic manifold.

For any dg-symplectic manifold there exist the following two canonically-associated structures: an $L_\infty$-algebra, and a topological field theory. In this case both were deduced a long time ago \cite{roytenberg1998courant,Ikeda:2002wh,Roytenberg:2006qz}. We will only give a revisionist account that anticipates the exceptional story to follow. A mostly self-contained derivation is given in appendices \ref{appendix:gradedsupermanifolds} and \ref{appendix:aksz}.

First, the $L_\infty$-algebra structure: by Proposition \ref{linftyfromdgla} it is defined on the graded vector space $\mathbb L$ formed by all functions of degrees $0$ and $1$ with the new grading assignments $\mathbb L_1=C_1^\infty(\mathcal M), \mathbb L_2=C^\infty(M)$. If we let $A\in \mathbb L_1$, $B \in \mathbb L_2$, the nonzero $L_\infty$ brackets are
\begin{align}
\{B\}=(\Theta,B)\,,\quad \{A_1,A_2\}=\big((\Theta,A_{[1}) A_{2]}\big)\,, \\
\{A,B\}=\frac{1}{2}\big((\Theta,A),B\big)\,,\quad \{A_1,A_2,A_3\}=\Big(\big((\Theta,A_{[1}),A_2\big),A_{3]}\Big)\,.
\end{align}
In the last three we recognise the Courant bracket, the anchor map, and the Jacobiator: from the $k=2$ Jacobi identity \eqref{kjacobirule}
\be
3\big\{\{A_{[1},A_{2}\},A_{3]}\big\}=-\{\{A_{1},A_2,A_3\}\}\propto\psi^\mu\partial_\mu\langle L_{A_{{[1}}} A_2,A_{3]}\rangle\,.
\ee
One can also prove the various Jacobi identies directly from the Courant algebroid axioms as in \cite{roytenberg1998courant,Hohm:2017pnh}.\footnote{Beware however that we are using the convention where all brackets are degree $-1$ and also that their $\mathbb L$ is bigger: in our conventions there is also a nonzero space $\mathbb{L}_3$ housing the constants on $M$. It is unclear what the necessity of that is.} What is interesting here  is that the graded vector space underlying the $L_\infty$-algebra is identical to the tensor hierarchy (functions of degrees 0 and 1 on $\mathcal M$).

Finally, the associated topological field theory. This is given by the AKSZ construction \cite{Alexandrov:1995kv} applied to the dg-symplectic target $\mathcal M$ of degree $p$, which can perhaps be seen as a generalisation of the ``Chern-Simons quantum mechanics'' of \cite{Howe:1989uk}. Making the standard choice for the worldvolume supermanifold, namely $T[1]\Sigma$, produces a topological $p$-brane lagrangian in terms of a Batalin-Vilkovisky master action, as we review in appendix \ref{appendix:aksz}. For closed worldvolumes the master equation is satisfied iff $(\Theta,\Theta)=0$.

Fortunately we will not consider the ghost sector and the prescription for writing down the bosonic lagrangian is easy: promote homogeneous local coordinates $z^a$ on $\mathcal M$ to forms on the $p$-brane worldvolume of rank equal to $\deg z^a$, then write the lagrangian $p$-form
\be
(-1)^{p}\vartheta_adz^a - \Theta(z)
\ee
where $\vartheta_a$ are the components of the canonical symplectic potential 1-form for $\omega$ on $\mathcal M$ \eqref{canonicalsymplecticpotential}. Naturally, starting from the dg-symplectic manifold defining a Courant structure yields the Courant sigma model \cite{Ikeda:2002wh} (a 2-brane since $p=2$). In the exact case the lagrangian is
\be
-p_\mu dx^\mu + \frac{1}{2}(\chi_\mu d\psi^\mu + \psi^\mu d\chi_\mu) -\left(p_\mu \psi^\mu + \frac{1}{6} H_{\mu\nu\rho}(x)\psi^\mu\psi^\nu\psi^\rho\right)\,.
\ee
Here $(x^\mu,\psi^\mu,\chi_\mu,p_\mu)$ are ordinary 0-, 1-, 1-, and 2-forms respectively on the 2-brane worldvolume $\Sigma$. Eliminating $p_\mu,\psi^\mu$ by their equations of motion produces the ordinary electric string WZ coupling to a $B$-field with 3-form field strength $H$, up to a total derivative:
\be
\frac{1}{6} H_{\mu\nu\rho}(x)dx^\mu dx^\nu dx^\rho -\frac{1}{2}dx^\mu d\chi_\mu\,.
\ee

\section{The dg-symplectic geometry of the exceptional tangent bundle}
Consider the problem of expressing the generalised Lie derivative $L_A A'$ of the exceptional tangent bundle in the derived Poisson bracket form
\be
-\big((\Theta,A),A'\big)
\ee
where the Poisson bracket $(\--,\--)$ and hamiltonian $\Theta$ together determine a dg-symplectic structure (see Definition \ref{dgsymplecticmanifold}) on the graded supermanifold $\mathcal M$, all of which are to be determined. Assume that the degree of the symplectic form $\omega$ on $\mathcal M$ is $p$, so the Poisson bracket is of degree $-p$ and $\Theta$ is degree $p+1$, and that the degrees of the functions $A$ and $A'$ representing the corresponding generalised vectors are both $q$.

Unlike the Courant algebroid case we have little to work with in terms of identities that the generalised Lie derivative known to satisfy, \emph{except} identity \eqref{loday}: generalised vectors form a Leibniz algebroid even in the exceptional case. In fact \eqref{loday} always holds in this setup if --- besides $(\Theta,\Theta)=0$ --- we assume that $((\Theta,A),A')$ is of degree $q$ when both $A$ and $A'$ are. This requirement fixes $q=p-1$ and the calculation \eqref{lodayproof1}, \eqref{lodayproof2} goes through unchanged.

Therefore we only need to somehow accommodate the exceptional tangent bundle inside the space of functions of degree $p-1$, and find a suitable hamiltonian $\Theta$. This must be done on a case-by-case basis.

\subsection{M-theory}

Consider the M-theory construction first. The untwisted M-theory exceptional tangent bundle is \cite{Hull:2007zu,Pacheco:2008ps}
\begin{align}
\label{e6tangentmtheoryidentification}
\begin{matrix}
&E&\cong &TM&\oplus &\Lambda^2 T^\star M& \oplus &\Lambda^5 T^\star M\,.\\
&A\in \Gamma(E)&\leftrightarrow &(v\,,& &\omega\,,& &\sigma)
\end{matrix}
\end{align}
where the internal space $M$ is $d$-dimensional. The fibres of $E$ transform in the $R_1$ representations of $E_d\times \mathbb{R}^+ $ as given by table \ref{udualitytable} for $2\leq d \leq 6$. The corresponding generalised Lie derivative (in the absence of twists) is \cite{Pacheco:2008ps}
\begin{equation}
\begin{split}L_{A}A^{\prime} & =\mathcal{L}_{v}v^{\prime}+(\mathcal{L}_{v}\omega^{\prime}-\imath_{v^{\prime}}d\omega)+(\mathcal{L}_{v}\sigma^{\prime}-\imath_{v^{\prime}}d\sigma-\omega^{\prime}d\omega)
\end{split}
\label{eq:M_Dorf_vector}
\end{equation}

 Clearly the $p=6$ graded symplectic manifold $T^\star[6]T[1]M$ accommodates the vector $v$ and 5-form $\sigma$ inside the space of functions at degree $p-1=5$. To get the 2-form as well we add an extra coordinate $\zeta$ of degree $3$. We end up with\footnote{As pointed out in the paper with either the best or worst title \cite{severa2001some}, section 3, principal $\mathbb R[n]$-bundles are trivialisable so there is no gain in generality in considering an $\mathbb{R}[3]$-bundle with base $T^\star[6]T[1]M$.}
 \be
 \mathcal M=T^\star[6]T[1]M\times \mathbb R[3]
 \ee with coordinates
\begin{align}
\begin{matrix}
&(x^\mu\,,& &\psi^\mu\,,& &\zeta\,,& &\chi_\mu \,,& &p_\mu)\\
\deg &0& &1& &3& &5& &6
\end{matrix}
\end{align}
and symplectic structure $\omega=dp_\mu dx^\mu- d\chi_\mu d\psi^\mu-d\zeta d\zeta/2
$ yielding the Poisson brackets
\begin{align}
(x^\mu,p_\nu)&=\delta^\mu_\nu=-(p_\nu,x^\mu)\,,\\
(\psi^\mu,\chi_\nu)&=\delta^\mu_\nu=+(\chi_\nu,\psi^\mu)\\
(\zeta,\zeta)&=1
\end{align}
The functions $A$ at degree $6-1=5$ are expressed as
\be
\label{sectionsasfunctions}
A= v^\mu(x)\chi_\mu + \frac{1}{2!}\omega_{\mu_1\mu_2}(x)\zeta \psi^{\mu_1}\psi^{\mu_2} - \frac{1}{5!}\sigma_{\mu_1\dots \mu_5}(x)\psi^{\mu_1}\cdots\psi^{\mu_5}
\ee
so we only need to find a suitable hamiltonian $\Theta$ of degree $6+1=7$ to complete the construction. The only choice not involving arbitrary functions on $M$ (motivated by the analogous situation in the $O(d,d)$ case) is $\Theta=p_\mu\psi^\mu$. Then
\begin{prop}
\label{genLieMtheory}
With the identification \eqref{sectionsasfunctions} between sections $A,A'$ of $TM\oplus \Lambda^2 T^\star M \oplus \Lambda^5 T^\star M$ and functions $A,A'$ of degree 5, the generalised Lie derivative $L_A A'$ \eqref{eq:M_Dorf_vector} can be written as the following derived Poisson bracket (for $\Theta=p_\mu\psi^\mu$):
\begin{align}
-((\Theta,A),A')=\nonumber\\
\left(v^\mu\partial_\mu v'\,^\nu - v'\,^\mu \partial_\mu v^\nu\right)\chi_\nu\nonumber\\
\left(v^\mu\partial_\mu\omega'_{\nu_1\nu_2} + 2 \partial_{\nu_1} v^\rho \omega'_{\rho\nu_2}-3 v'\,^\rho \partial_{[\rho}\omega_{\nu_1\nu_2]}\right)\frac{1}{2}\zeta\psi^{\nu_1}\psi^{\nu_2}\nonumber\\
\left(v^\mu\partial_\mu \sigma'_{\mu_1\dots\mu_5}+ 5 \partial_{\mu_1}v^\rho\sigma'_{\rho\mu_2\dots\mu_5}-6v'\,^\rho \partial_{[\rho}\sigma_{\mu_1\dots\mu_5]}-\frac{5!}{4}\partial_{\mu_1}\omega_{\mu_2\mu_3}\omega'_{\mu_4\mu_5}\right)\frac{(-1)}{5!} \psi^{\mu_1}\cdots \psi^{\mu_5}\,.
\end{align}
\end{prop}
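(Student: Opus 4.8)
The plan is to verify the stated identity by a direct two-stage evaluation of the nested Poisson bracket, exploiting that the hamiltonian $\Theta=p_\mu\psi^\mu$ acts through only two elementary operations: bracketing with $p_\mu$ differentiates the $x$-dependent component fields $v,\omega,\sigma$ (since $(p_\mu,f(x))=-\partial_\mu f$), while bracketing $\psi^\mu$ against $\chi_\nu$ contracts indices through $(\psi^\mu,\chi_\nu)=\delta^\mu_\nu$, effectively trading a $\chi$ for a bare $\psi$. Throughout one tracks parities — the coordinates $\psi^\mu,\zeta,\chi_\mu$ are odd, $x^\mu,p_\mu$ even — and the degree count, which guarantees that $(\Theta,A)$ lands at degree $6$ and the outer bracket $((\Theta,A),A')$ at degree $5$, i.e.\ again a generalised vector. (That $(\Theta,\Theta)=0$ is immediate since $p$--$\psi$, $p$--$p$ and $\psi$--$\psi$ brackets all vanish; this is the input to the Leibniz identity but is not itself needed for the explicit formula.)

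First I would compute $(\Theta,A)$ term by term on the three pieces of $A$ in \eqref{sectionsasfunctions}. The $p$-bracket hits the component functions to produce their $\partial$-derivatives, while the $\psi$-bracket reaches only the $\chi_\mu$ inside the vector piece $v^\mu\chi_\mu$, producing a term $\propto v^\mu p_\mu$. The result is a sum of four degree-$6$ monomials: $v^\mu p_\mu$, a piece $\propto\partial_\nu v^\mu\,\psi^\nu\chi_\mu$, a piece $\propto\partial_\nu\omega_{\mu_1\mu_2}\,\zeta\psi^\nu\psi^{\mu_1}\psi^{\mu_2}$, and a piece $\propto\partial_\nu\sigma_{\mu_1\cdots\mu_5}\,\psi^\nu\psi^{\mu_1}\cdots\psi^{\mu_5}$ — this is precisely $X_\Theta$, the EFT derivative $\hat\partial$, applied to $A$.

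Next I would bracket this against $A'$, organising by which coordinate of $(\Theta,A)$ pairs nontrivially with which coordinate of $A'$. The $v^\mu p_\mu$ piece differentiates the components of $A'$ and supplies all the transport terms $v^\mu\partial_\mu(\cdots)$, i.e.\ the first term of each of $\mathcal L_v v'$, $\mathcal L_v\omega'$ and $\mathcal L_v\sigma'$. The $\partial_\nu v^\mu\,\psi^\nu\chi_\mu$ piece contributes by two routes: its $\psi^\nu$ contracts the $\chi_\rho$ of the $v'^\rho\chi_\rho$ slot, producing the $-v'^\mu\partial_\mu v$ half of the vector Lie bracket, while its $\chi_\mu$ contracts the $\psi$'s of the $\omega'$- and $\sigma'$-pieces, producing the ``weight'' contributions $\partial_{\nu_1}v^\rho\omega'_{\rho\nu_2}$ and $\partial_{\mu_1}v^\rho\sigma'_{\rho\mu_2\cdots\mu_5}$. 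The antisymmetric $\iota_{v'}d\omega$ and $\iota_{v'}d\sigma$ terms then arise from the $\zeta\psi^3$ and $\psi^6$ pieces contracting their surplus $\psi$ against the $\chi_\rho$ of $v'^\rho\chi_\rho$. Crucially, the genuinely exceptional cross term in the $5$-form sector comes from the one remaining channel: the self-bracket $(\zeta,\zeta)=1$ pairs the $\zeta$ of the $\zeta\psi^3$ piece of $(\Theta,A)$ with the $\zeta$ of the $\tfrac12\omega'\zeta\psi\psi$ piece of $A'$, consuming both and leaving the five $\psi$'s dressed with $\partial_{\mu_1}\omega_{\mu_2\mu_3}\omega'_{\mu_4\mu_5}$.

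I expect the main obstacle to be sign-and-combinatorics bookkeeping rather than any conceptual step. One must control the graded-Leibniz signs generated when odd coordinates are moved past one another, and the integer factors — the $2,3,5,6$ of the stated formula — produced when a single $\chi$ or $\zeta$ contracts a product of identical odd $\psi$'s. The most delicate of these is the coefficient of the exceptional term: one must check that the $\zeta$--$\zeta$ contraction reproduces exactly $-\tfrac{5!}{4}\partial_{\mu_1}\omega_{\mu_2\mu_3}\omega'_{\mu_4\mu_5}$ inside the overall $\tfrac{-1}{5!}\psi^{\mu_1}\cdots\psi^{\mu_5}$ prefactor (i.e.\ net coefficient $\tfrac14$ for the $\psi^5$ monomial). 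Finally I would identify the assembled antisymmetric pieces as $\iota_{v'}d\omega=3v'^\rho\partial_{[\rho}\omega_{\nu_1\nu_2]}$ and $\iota_{v'}d\sigma=6v'^\rho\partial_{[\rho}\sigma_{\mu_1\cdots\mu_5]}$, whereupon the total matches \eqref{eq:M_Dorf_vector} sector by sector.
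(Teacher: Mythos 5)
Your proposal is correct and follows essentially the same route as the paper's proof: compute $(\Theta,A)$ first (the four monomials you list, i.e.\ $\hat\partial$ acting on $A$), then evaluate the outer bracket against $A'$ channel by channel, with the exceptional $-\omega'\,d\omega$ contribution arising precisely from the $(\zeta,\zeta)=1$ contraction between the $\zeta\psi^3$ piece of $(\Theta,A)$ and the $\omega'$ piece of $A'$ --- the very term the paper singles out as the key check. Your sign-and-coefficient bookkeeping (net coefficient $\tfrac14$ on the $\partial\omega\,\omega'\,\psi^5$ monomial) is consistent with the Proposition as stated.
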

\begin{proof}
Straightforward calculation. One first finds
\be
(\Theta,A)=p_\mu v^\mu -\partial_\mu v^\nu \psi^\mu \chi_\nu + \frac{1}{5!} \partial_{\mu_1}\sigma_{\mu_2\dots\mu_6} \psi^{\mu_1}\cdots \psi^{\mu_6} + \frac{1}{2} \partial_{\mu_1}\omega_{\mu_2\mu_3} \zeta \psi^{\mu_1}\psi^{\mu_2}\psi^{\mu_3}\,.
\ee
There is a plus sign on the last term from moving a single $\psi$ past $\zeta$. Whence e.g.
\be
((\Theta,A),A')=\dots+\left(\frac{1}{2} \partial_{\mu_1}\omega_{\mu_2\mu_3} \zeta \psi^{\mu_1}\psi^{\mu_2}\psi^{\mu_3}\,, \frac{1}{2} \omega'_{\nu_1\nu_2}\zeta \psi^{\nu_1}\psi^{\nu_2}\right)=-\frac{5!}{4}\partial_{\mu_1}\omega_{\mu_2\mu_3}\omega'_{\mu_4\mu_5}\psi^{\mu_1}\cdots \psi^{\mu_5}
\ee
which correctly reproduces the $-\omega' d\omega$ term in $L_A A'$.
\end{proof}

Thus this is the --- or possibly \emph{a} --- correct dg-symplectic manifold structure for the M-theory $E_d$ generalised Lie derivative, for $2\leq d \leq 6$ and in the absence of twists (which we treat later). The extra odd coordinate $\zeta$ of degree 3 which was invoked to accommodate the 2-form is quite strange at the moment but its significance will be clarified when we discuss the corresponding topological field theory: it will produce the self-dual 2-form on the M5-brane worldvolume.

\subsection{Type IIB}
The untwisted type-IIB exceptional tangent bundle is
\begin{align}
\label{e5tangentiibidentification}
\begin{matrix}
&E&\cong &TM&\oplus &S\otimes T^\star M& \oplus &\Lambda^3 T^\star M\,.\\
&A\in \Gamma(E)&\leftrightarrow &(v\,,& &\lambda^\alpha,& &\rho)
\end{matrix}
\end{align}
where $S$ is an $SL(2)$-doublet bundle (with invariant antisymmetric tensor $\varepsilon_{\alpha\beta}$) over the $(d-1)$-dimensional internal space $M$. The fibres of $E$ transform in the $R_1$ representations of $E_d\times \mathbb{R}^+ $ as given by table \ref{udualitytable} for $2\leq d \leq 5$; the construction doesn't seem to work for $E_6$. The corresponding generalised Lie derivative $L_A A'$ (in the absence of twists) is \cite{Pacheco:2008ps}:
\begin{align}
\label{genliederivativeIIB}
L_A A'= \mathcal L_v v' + (\mathcal L_v \lambda'\,^\alpha -\iota_{v'}d\lambda^\alpha) + (\mathcal L_v \rho' - \iota_{v'} d\rho + \varepsilon_{\alpha\beta} d\lambda^\alpha \lambda'\,^\beta)\,.
\end{align}

Now take
\be
\mathcal M= T^\star[4]T[1]M\times \mathbb{R}^2[2]
\ee
with local homogeneous coordinates
\begin{align}
\begin{matrix}
&(x^\mu\,,& &\psi^\mu\,,& &\zeta_\alpha\,,& &\chi_\mu \,,& &p_\mu)\\
\deg &0& &1& &2& &3& &4
\end{matrix}
\end{align}
and symplectic structure $\omega=dp_\mu dx^\mu- d\chi_\mu d\psi^\mu+d\zeta_\alpha \varepsilon^{\alpha\beta}d\zeta_\beta/2$
 of degree $p=4$ defined by the Poisson brackets
\begin{align}
(x^\mu,p_\nu)&=\delta^\mu_\nu=-(p_\nu,x^\mu)\,,\\
(\psi^\mu,\chi_\nu)&=\delta^\mu_\nu=+(\chi_\nu,\psi^\mu)\\
(\zeta_\alpha,\zeta_\beta)&=\varepsilon_{\alpha\beta}\,.
\end{align}
Note that since $p=4$ is even and the ``extra variables'' $\zeta_\alpha$ are bosonic, they Poisson anticommute in this case so the above formula makes sense.

Sections $A$ of $E$ are identified with functions of degree $p-1=3$:
\be
\label{sectionsofEasfunctionsIIB}
A=v^\mu \chi_\mu +\lambda_\mu^\alpha(x) \psi^\mu \theta_\alpha + \frac{1}{3!}\rho_{\mu\nu\rho}(x) \psi^\mu\psi^\nu\psi^\rho
\ee

It is then easy to verify
\begin{prop}
\label{genLieIIB}
With the identification \eqref{sectionsofEasfunctionsIIB} between sections $A,A'$ of $E$ and functions of degree 3, the generalised Lie derivative $L_A A'$ \eqref{genliederivativeIIB} can be written as the following derived Poisson bracket (for $\Theta=p_\mu\psi^\mu$):
\begin{align}
 - \big((\Theta, A), A'\big)=\nonumber\\
 \left(v^\mu\partial_\mu v'\,^\nu - v'\,^\mu \partial_\mu v^\nu\right)\chi_\nu+  \left(v^\mu\partial_\mu \lambda_\nu'\,^\alpha + \partial_\nu v^\mu\lambda_\mu'\,^\alpha -2v'\,^\mu\partial_{[\mu}\lambda_{\nu]}^\alpha\right)\psi^\nu\zeta_\alpha +\nonumber\\
 \left(v^\sigma\partial_\sigma\rho'_{\mu\nu\rho} + 3 \partial_\mu v^\sigma \rho'_{\sigma\nu\rho}-4v'\,^\sigma\partial_{[\sigma}\rho_{\mu\nu\rho]} +6\varepsilon_{\alpha\beta}\partial_\mu \lambda_\nu^\alpha\lambda_\rho'\,^\beta \right) \frac{1}{3!}\psi^\mu\psi^\nu\psi^\rho\,.
\end{align}
\end{prop}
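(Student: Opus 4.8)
The plan is to mirror, step for step, the computation that proved Proposition~\ref{genLieMtheory} in the M-theory case: compute the inner bracket $(\Theta,A)$ first, then apply the outer bracket with $A'$, collect the result into its three degree-$1$ sectors --- those proportional to $\chi_\nu$, to $\psi^\nu\zeta_\alpha$, and to $\psi^\mu\psi^\nu\psi^\rho$ --- and match each sector against the corresponding piece of \eqref{genliederivativeIIB}. The only tools needed are the graded Leibniz rule and the three nonzero Poisson brackets $(x^\mu,p_\nu)=\delta^\mu_\nu$, $(\psi^\mu,\chi_\nu)=\delta^\mu_\nu$ and $(\zeta_\alpha,\zeta_\beta)=\varepsilon_{\alpha\beta}$. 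One structural difference from the M-theory proof is worth flagging at the outset: here the auxiliary coordinate $\zeta_\alpha$ carries \emph{even} degree $2$, so it commutes with the odd $\psi^\mu$ and no extra sign arises when reordering a $\psi$ past a $\zeta$ (contrast the ``plus sign from moving a single $\psi$ past $\zeta$'' noted in the M-theory proof, where $\zeta$ was odd).

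First I would evaluate $(\Theta,A)$ with $\Theta=p_\mu\psi^\mu$ and $A$ as in \eqref{sectionsofEasfunctionsIIB} (writing $\zeta_\alpha$ for the doublet coordinate that appears as $\theta_\alpha$ there). In $\Theta$ the factor $p_\mu$ differentiates the $x$-dependence of each component via $(p_\mu,f(x))=-\partial_\mu f$, while $\psi^\mu$ contracts with the lone $\chi_\nu$ in $A$; since $(\psi,\psi)=(\psi,\zeta)=0$ these are the only contractions. Collecting terms (and fixing all signs with the graded Leibniz rule) produces the degree-$4$ function
\begin{equation}
(\Theta,A)=p_\mu v^\mu-\partial_\mu v^\nu\,\psi^\mu\chi_\nu+\partial_{\mu_1}\lambda^\alpha_{\mu_2}\,\psi^{\mu_1}\psi^{\mu_2}\zeta_\alpha+\tfrac{1}{3!}\partial_{\mu_1}\rho_{\mu_2\mu_3\mu_4}\,\psi^{\mu_1}\psi^{\mu_2}\psi^{\mu_3}\psi^{\mu_4}\,,
\end{equation}
the direct IIB analogue of the intermediate expression displayed in the proof of Proposition~\ref{genLieMtheory}.

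Next I would bracket this against $A'$, organising the work sector by sector once one notes which pairs of terms can contract under the three available brackets. The $\chi_\nu$ sector receives contributions only from $p_\mu v^\mu$ acting (through $p_\mu$) on $v'^\nu(x)$ and from $-\partial_\mu v^\nu\psi^\mu\chi_\nu$ contracting its $\psi^\mu$ with the $\chi$ of $A'$, assembling into $\mathcal L_v v'$. The $\psi^\nu\zeta_\alpha$ sector collects the $p_\mu$-on-$\lambda'$ term, the $\chi$-contraction of $-\partial_\mu v^\nu\psi^\mu\chi_\nu$ with the $\psi$ of the $\lambda'$ part of $A'$, and the $\chi$-contraction of the $\psi^{\mu_1}\psi^{\mu_2}\zeta_\alpha$ term with the $v'^\rho\chi_\rho$ part of $A'$; together these reproduce $\mathcal L_v\lambda'^\alpha-\iota_{v'}d\lambda^\alpha$. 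The $\psi^3$ sector collects the $p_\mu$-on-$\rho'$ term, the two $\chi$-contractions giving $3\partial_\mu v^\sigma\rho'_{\sigma\nu\rho}$ and $-4v'^\sigma\partial_{[\sigma}\rho_{\mu\nu\rho]}$, and --- crucially --- the $\zeta$--$\zeta$ contraction described below. An overall sign check against the $-((\Theta,A),A')$ convention then completes the match to \eqref{genliederivativeIIB}; the Leibniz-algebroid identity \eqref{loday} comes for free here because $(\Theta,\Theta)=(p_\mu\psi^\mu,p_\nu\psi^\nu)=0$ and $q=p-1=3$, so the argument of \eqref{lodayproof1}--\eqref{lodayproof2} applies verbatim.

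The main obstacle, exactly as in the M-theory case, is the single term in the $\psi^3$ sector coming from the $\zeta$--$\zeta$ bracket: the $\psi^{\mu_1}\psi^{\mu_2}\zeta_\alpha$ part of $(\Theta,A)$ contracted with the $\lambda'^\beta_\rho\psi^\rho\zeta_\beta$ part of $A'$ via $(\zeta_\alpha,\zeta_\beta)=\varepsilon_{\alpha\beta}$, producing a term $\propto\varepsilon_{\alpha\beta}\partial_\mu\lambda^\alpha_\nu\lambda'^\beta_\rho\,\psi^\mu\psi^\nu\psi^\rho$. This is the IIB counterpart of the $-\omega' d\omega$ term singled out in the proof of Proposition~\ref{genLieMtheory}, and it is the one piece whose numerical coefficient (the $+6$ sitting in front of the $\tfrac{1}{3!}$) and sign are genuinely delicate: they hinge on the antisymmetry $\varepsilon_{\alpha\beta}=-\varepsilon_{\beta\alpha}$, on the even parity of $\zeta_\alpha$, and on the combinatorics of distributing the remaining $\psi$'s. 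Verifying that this term lands precisely on $+\varepsilon_{\alpha\beta}d\lambda^\alpha\lambda'^\beta$ in \eqref{genliederivativeIIB}, with all other sectors matching, is the crux of the ``easy to verify'' claim.
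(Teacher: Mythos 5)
Your overall strategy is the right one and is exactly the paper's (the paper's ``easy to verify'' is precisely the M-theory computation repeated: inner bracket, then outer bracket, matched sector by sector), and your bookkeeping of which contractions feed which sector --- including the identification of the $(\zeta_\alpha,\zeta_\beta)=\varepsilon_{\alpha\beta}$ contraction as the source of the $+6\varepsilon_{\alpha\beta}\partial_\mu\lambda^\alpha_\nu\lambda'^\beta_\rho$ term --- is correct. However, the one concrete formula you display, the intermediate bracket $(\Theta,A)$, has wrong signs on its last two terms, and the error is not cosmetic. With the paper's conventions one has $(\Theta,f(x))=(p_\mu,f)\psi^\mu=-\partial_\mu f\,\psi^\mu$; this is fixed unambiguously by calibrating against the proof of Proposition \ref{genLieMtheory}, where the displayed plus signs arise from two mechanisms specific to the M-theory case: the explicit minus sign in front of the $\sigma$ term of the identification \eqref{sectionsasfunctions}, and the extra sign from moving $\psi^\mu$ past the \emph{odd} $\zeta$. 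In the IIB identification \eqref{sectionsofEasfunctionsIIB} all terms carry plus signs and $\zeta_\alpha$ is \emph{even}, so --- as your own opening remark correctly observes --- neither mechanism operates, and the minus from $(p_\mu,f)=-\partial_\mu f$ survives. The correct intermediate expression is
\begin{equation}
(\Theta,A)=p_\mu v^\mu-\partial_\mu v^\nu\,\psi^\mu\chi_\nu-\partial_{\mu}\lambda^\alpha_{\nu}\,\psi^{\mu}\psi^{\nu}\zeta_\alpha-\tfrac{1}{3!}\partial_{\mu}\rho_{\nu_1\nu_2\nu_3}\,\psi^{\mu}\psi^{\nu_1}\psi^{\nu_2}\psi^{\nu_3}\,.
\end{equation}
Note also that your formula is internally inconsistent: its $-\partial_\mu v^\nu\psi^\mu\chi_\nu$ term uses the convention $(\Theta,f)=-\partial_\mu f\,\psi^\mu$, while your $\lambda$ and $\rho$ terms use $+\partial_\mu f\,\psi^\mu$; no uniform bracket convention produces it.

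This matters because the three affected terms of the Proposition all descend from exactly those two pieces of $(\Theta,A)$. Carrying your signs faithfully through the outer bracket with $A'$ flips the sign of the $-2v'^\mu\partial_{[\mu}\lambda^\alpha_{\nu]}$ term (from bracketing the $\lambda$ piece against $v'^\rho\chi_\rho$), of the $+6\varepsilon_{\alpha\beta}\partial_\mu\lambda^\alpha_\nu\lambda'^\beta_\rho$ term (from the $\zeta$--$\zeta$ contraction), and of the $-4v'^\sigma\partial_{[\sigma}\rho_{\mu\nu\rho]}$ term (from bracketing the $\rho$ piece against $v'^\rho\chi_\rho$); i.e.\ you would obtain $\mathcal L_v\lambda'^\alpha+\iota_{v'}d\lambda^\alpha$ and $\mathcal L_v\rho'+\iota_{v'}d\rho-\varepsilon_{\alpha\beta}d\lambda^\alpha\lambda'^\beta$, which is \emph{not} the generalised Lie derivative \eqref{genliederivativeIIB}, and the Proposition would appear to fail. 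With the corrected $(\Theta,A)$ above, the sector-by-sector computation you outline does go through and reproduces the Proposition exactly (for instance the $\zeta$--$\zeta$ contraction gives $-\varepsilon_{\alpha\beta}\partial_\mu\lambda^\alpha_\nu\lambda'^\beta_\rho\,\psi^\mu\psi^\nu\psi^\rho$ before the overall minus, hence $+6\varepsilon_{\alpha\beta}\partial_\mu\lambda^\alpha_\nu\lambda'^\beta_\rho\cdot\tfrac{1}{3!}\psi^\mu\psi^\nu\psi^\rho$ after it), so the fix is local: redo the inner bracket with consistent signs and then your plan is complete.
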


Again, the role of the extra variables $\zeta_\alpha$ is clarified in the context of the corresponding AKSZ topological field theory, where they reproduce gauge fields on the D3-brane worldvolume, as we see in section \ref{section:branewzterms}.

\subsection{Twists, automorphisms, and Bianchi identities/field equations}
\label{subection:twists}
Like in the $O(d,d)$ case we can recover the known twists of the generalised Lie derivative by considering the most general hamiltonian $\Theta$ consistent with $(\Theta,\Theta)=0$. The one restriction we will impose is that the anchor map $E\to TM$ defined implicitly by the following derived bracket
\be
-\big((\Theta,A),f)\,,\qquad f \in C^\infty(M)\,,\quad A\in C^\infty_{p-1}(\mathcal M)\cong \Gamma(E)
\ee
is \emph{onto} (NB this expression has the correct degree $(p+1)+(p-1)-2p=0$). This is analogous to considering \emph{exact} Courant algebroids in the $O(d,d)$ case.

In the M-theory case degree-counting implies the most general $\Theta$ (here of degree $6+1=7$) takes the form
\be 
\label{thetaexpressionmtheory}
\Theta= a^\mu{}_\nu(x) p_\mu \psi^\nu + b^\mu{}_{\nu\rho}(x)\chi_\mu \psi^\nu\psi^\sigma -F_7(x) \psi^7 + F_4(x) \zeta \psi^4 
\ee
We have used a compact notation where e.g.
\be
F_7(x) \psi^7:=\frac{1}{7!}F_{7\, \mu_1\dots\mu_7} \psi^{\mu_1}\cdots \psi^{\mu_7}\,.
\ee

Now to determine the constraints on $\Theta$ due to $(\Theta,\Theta)=0$.
\emph{Note that by the above assumption $a^\mu{}_\nu$ is invertible}. One can therefore use a symplectomorphism to set $a^\mu{}_\nu=\delta^\mu_\nu$: let $\psi^\mu= g^\mu{}_\nu(x) \psi'\,^\nu\,, \chi_\mu = g^{-1}_\mu{}^\nu(x) \chi'_\nu$. Then
\be
d\psi^\mu d\chi_\mu= d\psi'\,^\mu d\chi'_\mu+ d(g^{-1}_\nu{}^\rho dg^\nu{}_\mu \psi'\,^\mu \chi'_\rho)\,,
\ee
so the symplectic form is invariant if we also shift, schematically, $p_\mu=p'_\mu + g^{-1}\partial g \psi'\chi'$. For $g=a^{-1}$ the effect of this in $\Theta$ is to replace $a^\mu{}_\nu(x) p_\mu \psi^\nu \to p'_\mu \psi'\,^\mu$ at the price of $b^\mu{}_{\nu\rho}(x)$ contributions. However, $(\Theta,\Theta)=0$ for \eqref{thetaexpressionmtheory} for $a^\mu{}_\nu=\delta^\mu_\nu$ implies $b^\mu{}_{\nu\rho}(x)=0$.

Therefore we arrive at
\be 
\label{thetaexpressionmtheory2}
\Theta=  p_\mu \psi^\mu  -F_7(x) \psi^7 + F_4(x) \zeta \psi^4\,.
\ee
An easy calculation then gives,
\begin{align}
\label{mtheoryfieldequation}
\boxed{(\Theta,\Theta)=0 \iff dF_4=0\,,\quad dF_7+\frac{1}{2}F_4F_4=0\,.}&&\text{(M-theory)}
\end{align}
These conditions characterise the known consistent twists of the M-theory generalised Lie derivative \cite{Hull:2007zu} (see e.g. appendix E.1 of \cite{Ashmore:2015joa}). Upon identifying $F_7$ with the Hodge dual of the M-theory 4-form field strength $F_4$, the same conditions can also be interpreted as the Bianchi identity and equation of motion of the 11D supergravity 3-form $C$ (with $dC=F_4$ away from sources).

For the type IIB construction we similarly see that the most general hamiltonian $\Theta$ (now of degree $p+1=5$) that could possibly be consistent with $(\Theta,\Theta)=0$ is
\begin{align}
\label{thetaexpressionIIB}
\Theta=p_\mu \psi^\mu +F_3^\alpha \psi^3\zeta_\alpha + F_5\psi^5 +\frac{1}{2} A_\mu^{\alpha \beta} \zeta_\alpha \zeta_\beta \psi^\mu
\end{align}
where $(F_3^\alpha,F_5, A_\mu^{\alpha \beta})$ are all functions of $x^\mu$. They respectively define a doublet of three-forms, a five-form, and an $sl(2,\mathbb{R})$-valued 1-form (the the bilinears $\zeta_\alpha \zeta_\beta$ form an $sl(2,\mathbb{R})$ subalgebra under the Poisson bracket).

In full, for $a^\mu{}_\nu=\delta^\mu_\nu$ and $v=b=0$ we have
\begin{align}
(\Theta,\Theta)=\left(-\frac{2}{5!} \partial_{\mu_1}F_{5\,\mu_2\dots\mu_6}+\frac{1}{(3!)^2} F^\alpha_{3\,\mu_1\dots\mu_3}\varepsilon_{\alpha \beta} F^\beta_{3\,\mu_4\dots \mu_6}  \right)\psi^{\mu_1} \dots \psi^{\mu_6}+ \nonumber \\
\frac{2}{3!}\left(- \partial_{\mu_1} F^\alpha_{3\,\mu_2\dots\mu_4} + F^\beta_{3\,\mu_1\dots\mu_3}\varepsilon_{\beta \gamma}A^{\gamma \alpha}_{\mu_4}   \right)\zeta_\alpha \psi^{\mu_1} \dots \psi^{\mu_4}+\nonumber\\
\left(-\partial_{\mu} A_\nu^{\alpha \beta}+ A_\mu^{\alpha \gamma}A_\nu^{\delta\beta}\varepsilon_{\gamma\delta}\right)\psi^\mu\psi^\nu\zeta_\alpha\zeta_\beta\,,\label{typeiibthetathetaexplicit}
\end{align}
so
\begin{align}
\label{iibfieldequation}
\boxed{(\Theta,\Theta)=0\iff F_A=0\,,\quad D_AF_3=0\,,\quad d F_5-\frac{1}{2}F_3^\alpha\varepsilon_{\alpha\beta} F_3^\beta=0\,.} &&\text{(Type IIB)}
\end{align}
where $F_A$ is a field strength for the $sl(2,\mathbb{R})$-valued 1-form $A^{\alpha\beta}$ defined explicitly by the final line of \eqref{typeiibthetathetaexplicit}, and where $D_A$ is the associated exterior covariant derivative.

For $A^{\alpha\beta}=0$ we recover the known twists of the type IIB exceptional tangent bundle (see e.g. \cite{Ashmore:2015joa} appendix E.2).

For nonzero $A^{\alpha\beta}$ we have an extra twist by a flat $sl(2,\mathbb{R})$ connection. To interpret $(\Theta,\Theta)=0$ as field equations/Bianchi identities in type IIB supergravity, we first identify $A\propto dg g^{-1}$ where $g$ is an $SL(2,\mathbb{R})/U(1)$ coset representative encoding the IIB axion $C_0$ and dilaton $\phi$, so $F_A=0$ is seen as a Bianchi identity; we also need to relate the 3-form doublet $F^\alpha$ to the RR 3-form $G_3$ and the NS 3-form $H$. We choose
\begin{align}
\label{iibtwistparametrisation}
A^{12}=d\phi/2\,,\;A^{22}=e^\phi d C_0\,, \;(F^1,F^2)=(-e^{-\phi/2}H\,,\:e^{\phi/2}(G_3+C_0H)+e^{\phi/2} C_0 H)\,,
\end{align}
for which the first two equations of \eqref{iibfieldequation} are satisfied due to the IIB Bianchi identities $dH=d(G_3+C_0 H)=0$. For $F^\alpha$ this identification is ambiguous up to a constant matrix multiplying $(G_3+C_0 H,H)$. Then the last equation of \eqref{iibfieldequation} reads
\be
dF_5+ H G_3=0
\ee
which is the IIB RR 5-form Bianchi identity if we set $F_5=-G_5$.

The upshot is that in both M-theory and type IIB constructions,
\begin{center}
\emph{$(\Theta,\Theta)=0$ is equivalent to Bianchi identities/field equations for the fields specifying the twists.}
\end{center}
These should probably be seen as Bianchi identities possibly involving dual potentials, however.

\bigskip
Checking that the twists as we described above enter the generalised Lie derivative correctly is a trivial exercise in calculating $-((\Theta,A),A')$ for $\Theta$ as above and matching against the expressions given in e.g. \cite{Ashmore:2015joa}. It is better to derive them from automorphisms of the graded Poisson structure  on $\mathcal M$. Insofar as we regard the twists as characterising (part of) a supergravity background,
\begin{center} \emph{automorphisms should be seen as \emph{gauge transformations} relating equivalent backgrounds (i.e. twists)}
\end{center} (Note: this is \emph{not} the same notion as that of automorphisms of the generalised Lie derivative; those must also preserve $\Theta$.)

As pointed out in appendix \ref{automorphismsgradedPoisson}, an infinitesimal automorphism $X$ is always inner; there exists a function $R$ of degree $p$ so $X=X_R=(R,\--)$ ($X$ must be of degree zero). Therefore
\begin{itemize}
	\item In the M-theory case infinitesimal automorphisms are generated by $R\in C_6^\infty(\mathcal M)$:
	\be
	R=u^\mu(x)p_\mu+r^\mu{}_\nu(x)\psi^\nu\chi_\mu + a_3(x)\zeta \psi^3 + \tilde a_6(x) \psi^6\,.
	\ee
The last three terms generate a subgroup of the local $E_d\times \mathbb{R}^+$ action on $E\cong C_5^\infty(\mathcal M)$ (c.f. formula (E.6) of \cite{Ashmore:2015joa} for $l,\alpha,\tilde \alpha=0$) of $GL(d)$ transformations along with shifts by 3-forms $a$ and 6-forms $\tilde a$. Indeed, on a generalised vector $A\in C_5^\infty(\mathcal M)$
\begin{align}
X_R\cdot A=(R,A)=(-r^\mu{}_\nu v^\nu)\chi_\mu +\left(2 r^\mu{}_{\nu_1} \omega_{\mu\nu_2}+ v^\mu a_{\mu\nu_1\nu_2} \right) \frac{1}{2}\zeta \psi^{\nu_1}\psi^{\nu_2}\nonumber\\
+\left(5 r^\mu{}_{\nu_1} \sigma_{\mu \nu_2\dots\nu_5}+2 a_{\nu_1\nu_2\nu_3}\omega_{\nu_4\nu_5}+v^\mu \tilde a_{\mu\nu_1\dots\nu_5}  \right)\frac{-1}{5!}\psi^{\nu_1}\dots\psi^{\nu_5}
\end{align}
Furthermore for $\Theta=p_\mu\psi^\mu$ and $u,r=0$ we calculate
\begin{align}
X_R \cdot\Theta=-\frac{1}{3!}\partial_{\mu_1}a_{\mu_2\cdots \mu_4} \zeta \psi^{\mu_1}\cdots \psi^{\mu_4} + \frac{1}{6!}\partial_{\mu_1}\tilde a_{\mu_2\cdots \mu_7} \psi^{\mu_1}\cdots \psi^{\mu_7}\,,\\
X_R^2\cdot\Theta=+\frac{1}{3!3!}a_{\mu_1\cdots\mu_3}\partial_{\mu_4}a_{\mu_5\cdots\mu_7}\psi^{\mu_1}\cdots\psi^{\mu_7}\,, \qquad X_R^3\cdot \Theta=0\,.
\end{align}
Hence automorphisms of the generalised Lie derivative ($X_R\cdot \Theta=0$) consist of shifts by \emph{closed} 3- and 6-forms  besides the action of $GL(d)$; $R$ such that $X_R\cdot \Theta\neq 0$ on the other hand introduces 4- and 7-form twists $-F_7=d\tilde a+ada/2,-F_4=da$. Given $X_R$ acts on sections of $E$ by infinitesimal $E_d\times \mathbb{R}^+$ transformations this proves that $F_4,F_7$ enter the generalised Lie derivative correctly.

Quotienting $F_4,F_7$ by ``trivial'' twists $\exp(X_R)$ (for $R$ with $u,r=0$ with globally-defined 3-form $a$ and 6-form $\tilde a$) leads to the ``non-abelian de Rham cohomology'' characterising equivalent exceptional Leibniz algebroid structures as  was previously found in \cite{Baraglia:2011dg}, and should be thought of as the analogous result to the classification of exact Courant algebroids by $H^3(M)$ \cite{vsevera2015poisson}.

\item Similarly for type IIB $R$ must be degree 4, giving
\be
R=u^\mu(x)p_\mu+r^\mu{}_\nu(x)\psi^\nu\chi_\mu+ a^{\alpha\beta}(x)\zeta_\alpha \zeta_\beta+ B_2^\alpha(x) \zeta_\alpha \psi^2 + C_4(x) \psi^4\,.
\ee
The last four terms again generate  a subgroup of $E_{d}\times \mathbb{R}^+$acting on $E\cong C_3^\infty(\mathcal M)$, see appendix E.2 of \cite{Ashmore:2015joa}, specifically (E.35). This can be analysed like the previous case.

\end{itemize}

\section{$L_\infty$ and the tensor hierarchy}
As we review in appendix \ref{appendix:linfinity}, for any dg-symplectic manifold $\mathcal M$ there is a canonical $L_\infty$-algebra $\mathbb L$: the Poisson algebra $C^\infty(\mathcal M)$ is a graded Lie algebra (after a grading change) and for any graded Lie algebra the results of \cite{getzler2010higher,fiorenza2007structures} show there is an associated $L_\infty$ algebra. As a graded vector space
\be
\mathbb L=C_{p-1}^\infty(\mathcal M)\oplus C_{p-2}^\infty(\mathcal M)\dots \oplus C_{0}^\infty(\mathcal M)\oplus \dots
\ee
which terminates at $C_{0}^\infty(\mathcal M)$ if $\mathcal M$ has no coordinates of negative degree, as is the case in this paper. The grading is changed so $C_{p-n}^\infty(\mathcal M)$ is degree n in $\mathbb L$; we will accordingly write $\mathbb L_n:=C_{p-n}^\infty(\mathcal M)$.

That is the construction we use to associate $L_\infty$-algebras to the exceptional generalised Lie derivative in both M-theory and type IIB constructions in this paper. In the M-theory case an $L_\infty$-algebra structure (but not the $L_\infty$-algebroid) was first found in \cite{Baraglia:2011dg}, and the EFT generalisation thereof in \cite{Cederwall:2018aab}. The type IIB construction seems original.

What is interesting from a physics standpoint is that for both $E_d$ constructions considered in this paper and for $2\leq d \leq 6$,
\begin{align}
{\mathbb L}_1\,,{\mathbb L}_2\,,\dots \leftrightarrow R_1\,,R_2\,,\dots
\end{align}
where $R_n$ on the right-hand side are the modules of table \ref{udualitytable} characterising the EFT tensor hierarchy! The identification is valid up to a certain $R_n$ (depending on $d$ and the construction considered). When the representations do match the identification is precise: the fields on the left-hand side have the correct weight \eqref{godweight} under the generalised Lie derivative.

Notwithstanding the weight issues which we deal with shortly, the best way to confirm that the modules $\mathbb L_1,\mathbb L_2,\dots$ transform as the corresponding modules $R_1,R_2,\dots$ in the tensor hierarchy whenever they agree is to mimic the derivation of the tensor hierarchy in \cite{Hohm:2015xna}. First write $A\in\mathbb L_1, B\in \mathbb L_2,\dots$ as for $R_1,R_2,\dots$ that paper. To check that $X_{-(\Theta,A)}$ generates the Dorfman derivative with respect to $A$ in $\mathbb L_2$ (for $\Theta=p_\mu\psi^\mu$; we assume no twists, to match the EFT literature), notice that $(A_1,A_2)\in \mathbb L_2\, \forall A_1,A_2\in\mathbb L_1$. Then since $X_{-(\Theta,A)}$ is of degree zero,
\be
X_{-(\Theta,A)}\cdot (A_1,A_2)= (X_{-(\Theta,A)}\cdot A_1,A_2)+ (A_1,X_{-(\Theta,A)}A_2)\,.
\ee
By comparing with the argument of \cite{Hohm:2015xna} from (2.32) to (2.35) we see the claim follows \emph{if we can identify the Poisson bracket $(A_1,A_2)$ with the bullet product $\bullet$ as defined in that reference} (i.e. the well-known symmetric map $R_1\otimes R_1 \to R_2$), given that the generalised Lie derivative of generalised vectors in $R_1$ is correctly reproduced by $X_{-(\Theta,A)}$. This must be done separately for the M-theory and type IIB constructions.

More generally, if all the relevant Poisson brackets match the corresponding bullet products, $X_{-(\Theta,A)}$ will correctly generate the generalised Lie derivative on the corresponding module $\mathbb L_n$. Given that, if we use identity \eqref{pbhomomorphism} (expressing the homomorphism between the Lie derivative of graded vector fields and Poisson bracket) to find
\be
\label{magicproof}
-X_{(\Theta,A)}\cdot T=-(X_\Theta X_A + X_A X_\Theta)\cdot T\qquad \forall T\in\mathbb L_n
\ee
\emph{we arrive at the magic formula for the generalised Lie derivative} \cite{Wang:2015hca}: $X_A$ acts by bullet product on all modules $\mathbb L_n$ as previously established, thus \eqref{magicproof} is only consistent with the magic formula
\be
L_A T=\hat \partial (A\bullet T) + A\bullet \hat\partial T.
\ee
if we identify $X_\Theta \sim \hat \partial$. This identification makes sense because $X_\Theta$ is a map $\mathbb L_n\to\mathbb L_{n-1}$ ($X_\Theta$ is degree 1 as a derivation of $C^\infty(\mathcal M)$ and we identified $\mathbb L_n\cong C^\infty_{p-n}(\mathcal M)$).

\subsection{A weighty topic}

Figuring out the weights requires matching conventions between EGG and EFT. First let $V^M$ be an EFT generalised vector of weight $\lambda_V$. Under the EFT gen. Lie derivative with parameter $A^N$ we have ($\omega$ as in \eqref{godweight}, $Y$ is as in \cite{Berman:2012vc})
\be
\delta_A V^M=  A^P \partial_P V^M- V^P \partial_P  A^M + Y^{MN}{}_{PQ}\partial_N  A^P V^Q + ( \lambda_V +\omega) \partial_P  A^P V^M\,.
\ee
Fix a solution of the section condition ($Y^{MN}{}_{PQ}\partial_M\partial_N=0$) so $\partial_M=(\partial_\mu,\partial_A)$ with $\partial_A=0$, and the section $M$ has local coordinates $x^\mu$. Infinitesimal diffeomorphisms acting on the section arise from the vector component of $ A^M$, so set $ A^M= \lambda^\mu \delta_\mu^M$. Set also $V^M=\delta_\mu^M v^\mu$ to check whether the vector component of $V$ is a density or not and at what $\lambda_V$. Since $Y^{\mu\nu}{}_{MN}=0 \implies Y^{MN}{}_{\mu\nu}=0$ (at least for $E_{d\leq 6}$, see e.g. \cite{Arvanitakis:2018hfn}) we get
\be
\delta_\lambda v^\mu=\lambda^\nu\partial_\nu v^\mu - v^\nu\partial_\nu \lambda^\mu +0 +(\lambda_V +\omega) \partial_\nu \lambda^\nu V^\mu
\ee
so we can unambiguously say $v^\mu$ is a genuine vector field on the section --- as opposed to a vector density --- whenever $\lambda_V+\omega = 0$. In EGG on the other hand for an $E_d\times\mathbb R^+$ generalised vector (in the notation of e.g. \cite{Ashmore:2015joa})
\be
\label{weight4}
L_ A V^M=  A^P \partial_P V^M- (\partial\times_\text{adj} A)^M{}_P V^P
\ee
which is explicitly written in terms of tensors on the section $M$ as in Propositions \eqref{genLieMtheory}, \eqref{genLieIIB} here. From the expressions therein we see that the transformation \eqref{weight4} includes no term $\partial_i\lambda^i$, hence that the vector component of $V$ in \eqref{weight4} is also a genuine tensor.

The conclusion is that an EGG generalised vector $V^M$ as is usually written corresponds to an EFT generalised vector of weight $\lambda_V=-\omega$, which transforms the same as a function $V\in C_{p-1}^\infty(\mathcal M)$ in our dg-symplectic manifold construction of the generalised Lie derivative. Therefore the component $t^{i_1\dots i_n}$ of an EFT generalised tensor $T^{M_1 M_2\dots M_n}$ (all indices up) of weight $-n\omega$ (under the \emph{generalised} Lie derivative) will also transform as a genuine tensor and not as a tensor density. Since the fields in the tensor hierarchy for all $R_n (n>1)$ take that form \cite{Aldazabal:2013via} and carry that weight this suggests that fields in the tensor hierarchy are genuine tensors as far as ordinary diffeomorphisms of the section $M$ are concerned. At the same time, if $T \in C_n^\infty(\mathcal M) \,, (0\leq n \leq (p-1))$ and the generalised Lie derivative $L_A T$ is defined via derived Poisson bracket as before,
\be
-\big((\Theta,A),T)
\ee
for $A=\lambda^\mu(x) \chi_\mu$ we will always get the ordinary Lie derivative with respect to $\lambda^\mu$ with no density terms. This strongly suggests that each level of the $L_\infty$ algebra constructed above transforms under generalised Lie derivative exactly like the corresponding module in the tensor hierarchy.

A caveat in that conclusion is that one could dualise (some of) the tensors inside $T$ using the invariant antisymmetric symbols $\varepsilon_{\mu_1\mu_2\dots \mu_{\dim M}},\varepsilon^{\mu_1\mu_2\dots \mu_{\dim M}}$, thereby yielding tensor densities. In fact, compared to the presentation in $C_n^\infty(\mathcal M)$, some of the tensors (on $M$) inside a generalised tensor $T$ in the EFT tensor hierarchy are almost always dualised that way. This seems to preclude a systematic comparison dealing with all cases at once and for this reason we will only consider only the cases $E_2\cong SL(2)\times \mathbb{R}^+\,, E_5\cong {\rm Spin}(5,5)\,, E_6$.

\subsection{M-theory hierarchy}
In the M-theory case ($\mathcal M=T^\star[6]T[1]M\times \mathbb{R}[3]$, $\dim M=d$, $2\leq d\leq 6$),
\begin{align*}
\boxed{\text{functions in $C^\infty(\mathcal M)$ of degrees 5, 4, \dots $d-1$ form $E_d$ representations $R_1$, $R_2$, \dots $R_{7-d}$\,.}}
\end{align*}
In other words the associated $L_6$ algebra spans the tensor hierarchy for the corresponding $E_d$ EFT excepting just the final rep $R_{8-d}\cong \bar R_1$; $\mathbb L_{8-d}$ is smaller than $R_{8-d}$ (cf. table \ref{udualitytable}). The boxed statement follows by comparing with the EFT literature after tabulating the $GL(d)$ representation content of the functions at each degree:
\begin{align*}
\begin{array}{r|cccccc}
\deg &0 &1 &2 &3 &4 &5 \\
\hline
GL(d) &\mathbb{R} &T^\star M &\Lambda^2 T^\star M &\mathbb{R}\oplus \Lambda^3T^\star M &T^\star M\oplus \Lambda^4T^\star M  & T M \oplus \Lambda^2 T^\star M \oplus \Lambda^5 T^\star M\\
E_d\: & R_6 &R_5 &R_4 &R_3 &R_2 &R_1 
\end{array}
\end{align*}

We now explicitly show the relation to the tensor hierarchy for $E_6$, $E_5\cong{\rm Spin}(5,5)$ and $E_2\cong SL(2)\times\mathbb{R}^+$, including the correct transformation under the generalised Lie derivative. For $E_6$ this is in fact trivial because the only module $\mathbb L_n$ that matches the corresponding $R_n$ is $\mathbb L_1$ for which the claim is true by Proposition \ref{genLieMtheory}. For ${\rm Spin}(5,5)$ we also need to check $\mathbb L_2$. $R_2$ is the vector rep of ${\rm Spin}(5,5)$ branching as $\bf 10\to \bf 5 \oplus \bar{\bf 5}$ under the $SL(5)$ subgroup (see e.g. \cite{Wang:2015hca}), matching $\mathbb L_2\cong T^\star M \oplus \Lambda^4 T^\star M$. We also need to check that the Poisson bracket reproduces the bullet product. Write \eqref{sectionsasfunctions} for $A \in \mathbb L_1$ then calculate
\be
-(A,A)/2= (\iota_v \omega)_1 \zeta \psi^1 + (\iota_v \sigma -\omega\wedge\omega/2)\psi^4\qquad A\in\mathbb L_1\,.
\ee
This indeed matches $\bullet: R_1\otimes R_1\to R_2$ (which is symmetric) as given in a $GL(d)$-manifest notation in \cite{Coimbra:2011ky} formula (C.15)\footnote{To see that is indeed the same as the corresponding EFT bullet product notice (2.27) in that paper relating their bundle $N$ to the section condition.}.

In the case of $E_2\cong SL(2)\times \mathbb R^+$ which has the longest tensor hierarchy, $d=\dim M=2$ so the spaces $\mathbb L_n$ simplify considerably. To match the $SL(2)\times \mathbb R^+$ EFT \cite{Berman:2015rcc} we need to dualise using the $SL(2)$ invariant tensor densities $\varepsilon_{\mu\nu}$ and $\varepsilon^{\mu\nu}$ on $M$. We display the result in the notation of $SL(2)\times \mathbb R^+$ appendix A (with the replacement $\alpha\to \mu$):
\begin{flalign*}
\begin{array}{c|c|c|c|c|c|}
F_s &E^{\mu,\nu\rho,ss}(x)\varepsilon_{\mu\sigma}\varepsilon_{\nu\rho}\psi^\sigma &D^{\mu\nu,ss}(x)\varepsilon_{\mu\nu}\varepsilon_{\rho\sigma}\psi^\rho\psi^\sigma &C^{\mu\nu,s}(x)\varepsilon_{\mu\nu}\zeta&B^{\mu,s}(x)\varepsilon_{\mu\nu}\psi^\nu\zeta  & A^\mu(x)\chi_\mu +A^s(x)\varepsilon_{\mu\nu}\zeta \psi^\mu\psi^\nu\\\hline
\supset R_6 &R_5 &R_4 &R_3 &R_2 &R_1 
\end{array}
\end{flalign*}
As previously announced the correspondence breaks at $R_{8-d}=R_6$: $R_6$ is reducible and we only get the $SL(2)$ singlet. One verifies by inspection that all Poisson brackets involving only objects in $\mathbb L_5$ or below match the bullet products ((A.3) of \cite{Berman:2015rcc}) involving $R_5$ and below or otherwise of the form $(A,\text{any})$, proving that $X_{-(\Theta,A)}$ generates the generalised Lie derivative (up to constant relative coefficients which we dropped above). As a check, consider $\hat \partial$ as defined in (A.4) of \cite{Berman:2015rcc} and given here in an M-theory section:
\begin{align}
 \left( \hpartial \Ab \right)^{\mu} &= 0 \,, & \left( \hpartial \Ab \right)^{s} &= \partial_\mu \Ab^{\mu,s} \,, & \left( \hpartial \Ac \right)^{\mu,s} &= \partial_ \nu \Ac^{[ \nu\mu],s} \,, \nonumber \\
 \left( \hpartial \Ad \right)^{[\mu \nu],s} &= 0 \,, & \left( \hpartial \Ae \right)^{[\mu \nu],ss} &= \partial_ \rho \Ae^{ \rho,[\mu \nu],ss} \,, & \left( \hpartial \Af \right)^{ \rho,[\mu \nu],ss} &= 0\,.
\end{align}
Since $X_\Theta:=(p_\mu\psi^\mu,\--)=-\psi^\mu\partial_\mu$ on $\mathbb L$ it indeed agrees with $\hat \partial$ if the dualisations are understood properly. Example:
\be
X_\Theta\cdot(E^{\mu,\nu\rho,ss}\varepsilon_{\mu\sigma}\varepsilon_{\nu\rho}\psi^\sigma)=\partial_\tau E^{\mu,\nu\rho,ss}\varepsilon_{\mu\sigma}\varepsilon_{\nu\rho}\psi^\sigma\psi^\tau=\frac{1}{2}\partial_\mu E^{\mu,\nu\rho,ss}\varepsilon_{\nu\rho}\psi^\sigma\psi^\tau \varepsilon_{\sigma\tau }\,.
\ee

\subsection{Type IIB hierarchy}
In the type IIB case ($\mathcal M=T^\star[4] T[1]M \times\mathbb{R}^2[2],\, \dim M=d,\, 1\leq d \leq 4$)
\begin{align*}
\boxed{\text{functions in $C^\infty(\mathcal M)$ of degrees 3, 2, \dots $d-1$ form $E_{d+1}$ representations $R_1$, $R_2$, \dots $R_{5-d}$\,.}}
\end{align*}
In other words the associated $L_4$ algebra spans the tensor hierarchy for the corresponding $E_d$ EFT excepting the last two reps $R_{6-d}$ and $R_{7-d}\cong \bar R_1$; $\mathbb L_{6-d}$ and $\mathbb L_{7-d}$ are smaller than required (cf. table \ref{udualitytable}). The boxed statement follows by comparing with the EFT literature after tabulating the $GL(d)$ representation content of the functions at each degree:
\begin{align*}
\begin{array}{r|cccc}
\deg &0 &1 &2 &3  \\
\hline
GL(d)\times SL(2) &\mathbb{R} &T^\star M &\Lambda^2 T^\star M\oplus S &TM\oplus (S\otimes T^\star M)\oplus \Lambda^3 T^\star M\\
E_{d+1}\:  &R_4 &R_3 &R_2 &R_1 
\end{array}
\end{align*}

Like in the M-theory case, the claim follows for $d=4 \iff E_5\cong {\rm Spin}(5,5)$, by Proposition \ref{genLieIIB}. We also detail the correspondence in the case $d=1 \iff E_2\cong SL(2)\times \mathbb{R}^+$ \cite{Berman:2015rcc} (where now $x^s$ is the single local coordinate on $M$, and $\alpha,\beta$ are $SL(2)$ indices of the bundle $S$):
\begin{align*}
\begin{array}{c|c|c|c}
D^{\alpha\beta,ss}(x)\varepsilon_{\alpha\beta} &C^{\alpha\beta,s}(x)\varepsilon_{\alpha\beta}\psi^s&B^{\alpha,s}(x)\zeta_\alpha & A^s(x)\chi_s +A^\alpha(x)\zeta_\alpha \psi^s\\\hline
R_4 &R_3 &R_2 &R_1 
\end{array}
\end{align*}
The nilpotent derivative $\hat \partial$ on a type IIB section is
\begin{align}
 \left( \hpartial \Ab \right)^{\alpha} &= \partial_s \Ab^{\alpha,s} \,, & \left( \hpartial \Ab \right)^{s} &= 0\,, & \left( \hpartial \Ac \right)^{\alpha,s} &= 0 \,, \nonumber \\
 \left( \hpartial \Ad \right)^{[\alpha\beta],s} &= \partial_s \Ad^{[\alpha\beta],ss} \,, & \left( \hpartial \Ae \right)^{[\alpha\beta],ss} &= 0 \,, & \left( \hpartial \Af \right)^{\gamma,[\alpha\beta],ss} &= \epsilon^{\gamma\delta} \partial_s F_\delta 
\end{align}
which clearly agrees with the action of $X_\Theta=(p_s\psi^s,\--)=-\psi^s\partial_s$ (again, the match to the EFT literature is in the absence of twists).

\section{The topological field theories and M5/D3 Wess-Zumino terms}
\label{section:branewzterms}
To any dg-symplectic manifold $\mathcal M$ the construction \cite{Alexandrov:1995kv} of Alexandrov, Schwarz, Zaboronsky and Kontsevich (AKSZ) (which we review in appendix \ref{appendix:aksz}) associates a topological field theory given in terms of a (classical) Batalin-Vilkovisky master action \eqref{AKSZ}. If the symplectic form $\omega$ is of degree $p$, then the usual construction produces a topological $p$-brane lagrangian propagating in $\mathcal M$. Schematically, the lagrangian $(p+1)$-form is
\be
\vartheta- \Theta
\ee
where $\vartheta$ is the \emph{canonical} symplectic potential \eqref{canonicalsymplecticpotential} for $\omega=-d\vartheta$.

Since our M-theory and type IIB constructions involve $\omega$ of degrees 6 and 4 respectively, we get topological 6- and 4-branes. These depend on twists through $\Theta$ as in \eqref{thetaexpressionmtheory2} and \eqref{thetaexpressionIIB} respectively, whence the functionals
\begin{align}
\label{mtheorybosonicaksz}
S_{\rm M}=\int_{\Sigma_7}  -p_\mu dx^\mu+\frac{1}{6}(\psi^\mu d\chi_\mu   + 5  \chi_\mu d\psi^\mu)+\frac{1}{2}\zeta d \zeta  -\left(p_\mu \psi^\mu  -F_7(x) \psi^7 + F_4(x) \zeta \psi^4\right)\\ 
\label{iibbosonicaksz}
S_{\rm IIB}=\int_{\Sigma_5} -p_\mu dx^\mu +\frac{1}{4}( \psi^\mu d\chi_\mu + 3 \chi_\mu d\psi^\mu)+\frac{1}{2}\varepsilon^{\alpha\beta}\zeta_\alpha d\zeta_\beta\nonumber\\ -\left(p_\mu \psi^\mu +F_3^\alpha(x) \psi^3\zeta_\alpha + F_5(x)\psi^5 +\frac{1}{2} A_\mu^{\alpha \beta}(x) \zeta_\alpha \zeta_\beta \psi^\mu\right)\,.
\end{align}
In the M-theory action $x^\mu,\psi^\mu,\zeta,\chi_\mu,p_\mu$ are respectively 0-, 1-, 3-, 5-, and 6-forms on $\Sigma_7$, etc. for the type IIB action. (We have dropped the (anti)ghosts.) The equations of motion are equivalent to
\be
dz^a=(\Theta,z^a)\qquad (z^a=(x^\mu,p_\mu,\dots))\,.
\ee
We assume $\Sigma_7\,,\:\Sigma_5$ have boundaries $W_6\,, \: W_4$ respectively.

We relate these functionals to the Wess-Zumino terms describing the coupling of probe M5- and D3-branes (of worldvolumes $W_6\,,\: W_4$) to the form fields of 11D and IIB supergravity (as described by the twists in section \ref{subection:twists}) by imposing \emph{consistent boundary conditions} for the variational problem, in the sense that solutions of the equations of motion are actual stationary points of $S_{\rm M},S_{\rm IIB}$. The necessary and sufficient condition is that the respective boundary terms
\begin{align}
\label{m5boundaryterm}
\int_{W_6}-p_\mu \delta x^\mu - \frac{1}{6}(\psi^\mu \delta\chi_\mu+5\chi_\mu \delta \psi^\mu) - \frac{1}{2}\zeta \delta\zeta\,,   \\
\label{d3boundaryterm}
 \int_{W_4}-p_\mu \delta x^\mu - \frac{1}{4}(\psi^\mu \delta\chi_\mu+3\chi_\mu \delta \psi^\mu)+ \frac{1}{2}\varepsilon^{\alpha\beta}\zeta_\alpha \delta\zeta_\beta
\end{align}
vanish on-shell.

In both cases the $\psi^\mu$ and $p_\mu$ equations of motion are $p_\mu=d\chi_\mu+(\text{twist contribs.})\,,\:\psi^\mu=-dx^\mu$, so we can only impose a boundary condition on $\chi_\mu,\zeta,\zeta_\alpha$. (We do not impose a boundary condition on $x^\mu$ because that is unnatural from the point of view of a theory on $W_p$.) Of those, $\chi_\mu$ is a lagrange multiplier for $d\psi^\mu=0$ which only enters the equations of motion as just shown; \emph{any} value of $\chi_\mu$ is consistent with the equations of motion.

\subsection{M5}
\label{section:m5}
First use the $p_\mu,\psi^\mu$ equations of motion to rewrite the boundary term
\be
\int_{W_6}\delta(-\chi_\mu dx^\mu/6)-\iota_{\delta x}F_7 + \zeta \iota_{\delta x}F_4 - \frac{1}{2}\zeta \delta \zeta
\ee
from which we see that $\chi_\mu$ is completely irrelevant for consistency of the variational principle in the presence of a boundary so we simply set it to zero. Then using the identities $\iota_{\delta x}F_7=\iota_{\delta x} dC_6-\iota_{\delta x}(C_3F_4)/2$ (for locally-defined potentials $dC_3=F_4, dC_6= F_7+C_3F_4/2$), $\iota_{\delta x} dC_6=\delta C_6-d\iota_{\delta x} C_6$ and $\iota_{\delta x }(C_3F_4)= C_3 \delta C_3 - 2 C_3 \iota_{\delta x} F_4 + d(C_3 \iota_{\delta x} C_3)$ we find the boundary integrand
\be
d(\iota_{\delta x} C_6+C_3\iota_{\delta x}C_3/2)-\delta C_6 + (\zeta-C_3) \iota_{\delta x}F_4 + \frac{1}{2}(C_3\delta C_3-\zeta \delta \zeta)\,.
\ee

Now consider the $\zeta$ equation of motion
\begin{align}
\label{zetaeom}
d\zeta=\iota^\ast F_4
\end{align}
where we now displayed explicitly the pullback by the embedding $\iota$ of the brane on spacetime described by the $x^\mu$. (As an aside we point out that this equation together with $dF_4=0$ on $M$ states that $(F_4,\zeta)$ represent a cohomology class on $M$ relative to the brane worldvolume $\Sigma_7$. A fortiori this is also a class relative to $W_6$, in agreement with a proposal by Kalkkinen and Stelle \cite{Kalkkinen:2002tk}). The general solution to this on $W_6$ is $\zeta=C_3 + h$ for $h$ a \emph{closed} 3-form and we eventually find the boundary integrand
\be
-\delta \left(C_6-\frac{1}{2} h C_3\right)+d(\iota_{\delta x} C_6+C_3\iota_{\delta x}C_3/2-h \iota_{\delta x} C_3)-\frac{1}{2}h\delta h\,.
\ee
Upon discarding the total derivative\footnote{If $F_7$ is a nontrivial class this is dangerous because $\chi_\mu$ involves the potential $C_6$ for it. The danger and a potential resolution is explained in e.g. \cite{Keir:2013jga}. For terms involving $C_3$ this seems safer due to \eqref{zetaeom}.} we conclude we need to impose a boundary condition on $h$ that makes the last term vanish.

Equivalently, \emph{the necessary and sufficient condition for consistency of the variational principle (after we subtract off the total variations) is that $h$ lie in some isotropic subspace inside the space of 3-forms on $W_6$ with the natural symplectic form $\Omega$ ($\forall \alpha,\beta \in \Lambda^3 T^\star W_6\,, \;\Omega(\alpha,\beta):=\int_{W_6}\alpha \beta$)}. In the absence of any other input we might as well let $h$ lie in a maximal isotropic i.e. lagrangian subspace. There is no natural choice without assuming additional structure, so we invoke a Lorentzian metric on $W_6$ and impose a self-duality condition
\be
\star_6h=\pm h\,.
\ee

Putting everything together,
\begin{align}
\delta\left(S_{\rm M}+\int_{W_6}\left[C_6-\frac{1}{2}h C_3\right]\right)=0 &&\text{on-shell}\nonumber\\
\chi_\mu|_{W_6}=0\,, \:\zeta|_{W_6}=C_3+h\,,\qquad \star_6 h= \pm h\,, \: dh=0\,.
\end{align}
Comparing with the M5-brane Wess-Zumino term \cite{Aharony:1996wp,Pasti:1997gx}, we can express this as
\begin{align}
\boxed{-S_{\rm M}=S_\text{M5, WZ}=\int_{W_6}C_6-\frac{1}{2}h C_3}  &&\text{on-shell.}
\end{align}
What is striking here is how \emph{we obtained the field strength of the chiral 2-form on the M5 worldvolume with essentially no input.} (In fact the qualifier ``essentially'' is possibly superfluous: since the necessary and sufficient condition is that $h$ lie in an isotropic subspace, take a lagrangian subspace containing it and try to find a Lorentzian metric such that the lagrangian consists of self-dual forms for that metric.)

It is therefore tempting to think of $S_{\rm M}$ as the correct form of the M5-brane Wess-Zumino coupling not only when the potentials exist so the above manipulations make sense but also in topologically nontrivial situations. A near-identical proposal in this context has in fact been already made by Kalkkinen and Stelle \cite{Kalkkinen:2002tk}: assume $\Sigma_7$ is now closed and is the boundary of a $V_8$. Then, using \eqref{mtheoryfieldequation} and the $p_\mu,\psi^\mu,\zeta$ equations, $S_{\rm M}$ reads
\be
\label{stellem5wz}
S_{\rm M}=\int_{V_8} \frac{1}{2}F_4F_4 -\int_{\Sigma_7}\frac{1}{2}\zeta F_4=\frac{1}{2}\int_{(V_8,\Sigma_7)}(F_4F_4,\zeta F_4)\,.
\ee
This is the fivebrane Wess-Zumino term in (31) of \cite{Kalkkinen:2002tk} for vanishing gravitational correction terms. (Our $\zeta $ is their $h$: by \eqref{zetaeom} and \eqref{mtheoryfieldequation}, $(F_4,\zeta)$ represents a relative cohomology class.) We refer to that paper for the relation to the ``Hopf-Wess-Zumino'' term of Intriligator \cite{Intriligator:2000eq}.

\subsection{D3}
Take the boundary term \eqref{d3boundaryterm} and use the $p_\mu\,,\psi^\mu$ equations of motion to cast it into the form
\be
\int_{W_4}-\frac{1}{4}\delta(\chi_\mu dx^\mu)+\frac{1}{2}\iota_{\delta x} A^{\alpha\beta}\zeta_\alpha \zeta_\beta + \iota_{\delta x} F_3^\alpha \zeta_\alpha + \iota_{\delta x} F_5 + \frac{1}{2} \varepsilon^{\alpha\beta}\zeta_\alpha \delta \zeta_\beta\,.
\ee
We are again free to set $\chi_\mu=0$ consistently with the variational problem. We introduce $(C_4,C_2^\alpha,h_\alpha)$, such that equations \eqref{iibfieldequation} and the $\zeta_\alpha$ field equation
\be
\label{zetaalphafieldequation}
d\zeta_\alpha=-\varepsilon_{\beta\alpha}(F_3^\beta+A^{\beta\gamma}\zeta_\gamma)
\ee
are satisfied on $M$ and $W_4$ respectively
\begin{align}
F^\alpha_{3\,\mu\nu\rho}=dC^\alpha_{2} - A^{\alpha\beta} C_{2}^\gamma \varepsilon_{\beta\gamma}\,,\qquad F_5=dC_4+\frac{1}{2} C_2^\alpha \varepsilon_{\alpha\beta} F_3^\beta\,,\\
\zeta_\alpha=\varepsilon_{\alpha\beta} C_2^\beta+h_\alpha\,;\qquad dh_\alpha-\varepsilon_{\alpha\beta} A^{\beta\gamma}h_\gamma=0 \label{halphaequation}\,.
\end{align}
These are the most general (local) solutions to those equations (recall that $(d+A)^2=0$ because $A$ is a flat connection). We proceed like in the M5 case and again use the identity $\delta C=\iota_{\delta x}d C+ d \iota_{\delta x} C$ (valid for any $C$ on $\Sigma_5$ that is the pullback of a form $C$ on the target $M$) to massage the boundary integrand into
\begin{align}
-d\left( \iota_{\delta x} C_4+\frac{1}{2} \iota_{\delta x} C_2^\alpha \varepsilon_{\alpha\beta} C_2^\beta+ \iota_{\delta x}C_2^\alpha h_\alpha\right)\nonumber\\
+\delta\left(C_4+\frac{1}{2} C_2^\alpha h_\alpha\right)+\frac{1}{2}h_\alpha \varepsilon^{\alpha\beta} \delta h_\beta +\frac{1}{2}\iota_{\delta x} A^{\alpha\beta}h_\alpha h_\beta\,.\label{iibboundaryterm2}
\end{align}

The analysis now bifurcates from the M5 case. We now assume that $A^{\alpha\beta}$ is associated to an $SL(2)/U(1)$ coset which we parameterise as in \eqref{iibtwistparametrisation} (without loss of generality). Then a short calculation shows that the general solution of \eqref{halphaequation} for $h_\alpha$ is
\be
h_2=e^{-\phi/2}h\,,\quad h_1=e^{\phi/2}(C_0 h + h')\,; \qquad dh=dh'=0\,.
\ee
In terms of the new variables $h,h'$ there are a number of cancellations in the last two terms in \eqref{iibboundaryterm2}:
\be
\frac{1}{2}h_\alpha \varepsilon^{\alpha\beta} \delta h_\beta +\frac{1}{2}\iota_{\delta x} A^{\alpha\beta}h_\alpha h_\beta=\frac{1}{2}(h'\delta h-h \delta h')\,.
\ee
Before we discuss boundary conditions on the pair $(h,h')$ we reexpress the potentials $C_2^\alpha, C_4$ in terms of potentials $C_2',B$ for the RR, and NSNS 3-forms $G_3,H$ and $C'_4$ for the RR 5-form $G_5$ (cf. \eqref{iibtwistparametrisation}):
\be
C^1_2=-e^{-\phi/2} B\,,\quad C_2^2=e^{\phi/2}(C'_2 + C_0 B)\,, \quad C_4=-C'_4 + B C_2'
\ee
where
\be
dB=H, \quad dC_2'=G_3+C_0H\,,\quad G_5=-F_5=dC'_4-H_3 C_2'\,.
\ee
for which
\be
C_2^\alpha h_\alpha= C_2' h - Bh'\,.
\ee
In terms of these quantities and after dropping the total derivatives, the boundary integrand is
\be
\delta\left(-C'_4+ B C'_2 +\frac 1 2(C_2' h - Bh')\right)+\frac{1}{2}(h'\delta h-h \delta h')\,.
\ee
Clearly one can regroup this term to again exhibit $SL(2)$ invariance. Interestingly this expression is rather different than previous proposals for $SL(2)$-covariant WZ terms for IIB D-branes \cite{Bergshoeff:2006gs,Cederwall:1997ab}.

Imposing a consistent boundary condition clearly involves breaking $SL(2)$. With the choice $h'=0$,
\begin{align}
\delta\left(S_{\rm IIB} + \int_{W_4}C'_4 - C_2'(B+h/2)\right)=0 &&\text{on-shell}\nonumber\\
\chi_\mu|_{W_4}=0\,, \zeta_1\,|_{W_4}=e^{\phi/2}(C_2'+C_0(B+h))\,, \zeta_2\,|_{W_4}= e^{-\phi/2}(B+h)\,,\qquad dh=0\,.
\end{align}

Comparison with the D3-brane WZ term \cite{Douglas:1995bn} (see e.g. \cite{deBoer:2010ud} appendix D) shows that on-shell we have $S_{\rm IIB}=-S_{\text{D3, WZ}}$ if we identify $h=2dV$ (for $V$ the usual D-brane gauge field) \emph{except} for the axion coupling $C_0(B+h/2)^2$, which is missing. This is mathematically consistent as $C_0(B+h/2)^2$ is gauge-invariant by itself (under gauge transformations of $B$, which are compensated by $h$). Since the only input that went into the construction of the topological field theory is the form of the generalised Lie derivative in the IIB construction of the exceptional tangent bundle, and given that the physical content of the generalised Lie derivative is the gauge transformations of type IIB supergravity, it is perhaps unsurprising that we fail to obtain the RR 0-form coupling because its presence cannot be deduced by considerations involving gauge-invariance alone. What is more disturbing is that there is clearly a coupling to the RR 0-form in $S_{\rm IIB}$ \eqref{iibbosonicaksz} through the $A^{\alpha\beta}\zeta_\alpha \zeta_\beta$ coupling; however, this cancels on-shell (use the $\zeta_\alpha$ equation of motion \eqref{zetaalphafieldequation} in $S_{\rm IIB}$ directly).


\section{Discussion}
Let us take stock of what we did. By rewriting the generalised Lie derivative of exceptional generalised geometry in terms of a graded Poisson bracket, we obtained, with essentially no work, an understanding of the intricate $L_\infty$-algebra structure governing generalised diffeomorphisms, and also the canonically associated topological field theories which turn out to be closely related to the Wess-Zumino terms defining the couplings of known \emph{non}-topological branes --- specifically the M5 and D3 --- to supergravity background fields. We emphasise again that the dg-manifold (i.e. $L_
\infty$-algebroid) structure is trivial to guess, at least for the M-theory, $E_6$ and IIB, $E_5$ cases: given that the ``Leibniz algebroid'' identity \eqref{loday} holds, for a symplectic form of degree $p$ it must be the case that generalised vectors lie in degree $p-1$. It is then a matter of arithmetic to arrange for the inclusion of the correct exceptional generalised tangent bundles in the cases we considered. Given that the corresponding dg-manifold structure for the Courant algebroid (i.e. string theory/T-duality) has been known for decades \cite{Roytenberg:2002nu}, the constructions in this paper settle a rather obvious open problem, identified as such in \cite{Deser:2018oyg}.

We should mention that we have only seriously considered local features of the construction. It remains to be checked that the dg-symplectic manifold correctly reproduces the features of the twisted exceptional generalised tangent bundle as in \cite{Hull:2007zu,Pacheco:2008ps} but that seems likely because twists act as symplectomorphisms (see section \ref{subection:twists}). Along the way one expects a nice description of the gerbe structure underlying the eleven-dimensional supergravity 3-form field to fall out. In \cite{Severa:2017oew} letter 7 \v Severa suggests (some) gerbes should be thought as principal $\mathbb R[n]$-bundles. In fact such bundles enter crucially in the constructions of this paper  and  the ``strange variables'' $\zeta,\zeta_\alpha$ are coordinates along their fibres.

Consider now extending to $E_7$ (for M-theory) or $E_6$ (for IIB). The untwisted $E_7$ generalised tangent bundle is
\be
TM\oplus \Lambda^2 T^\star M \oplus \Lambda^5 T^\star M \oplus (\Lambda^7 T^\star M\otimes T^\star M)\,.
\ee
The difference from the $E_6$ case is in the last summand. The mixed-symmetry term $\Lambda^7 T^\star M\otimes T^\star M$ is tricky: for any $\mathcal M$ of the form $T^\star[p]T[1]M\times R[n]$ the highest-form object in the space of functions at any fixed degree $r$ will be an $r$-form $C_{\mu_1\dots\mu_r}(x)\psi^{\mu_1}\dots\psi^{\mu_r}$. Introducing a second coordinate $\xi^\mu$ at degree 1 is problematic: one gets the mixed symmetry potentials in $\Lambda^{q}T^\star M\otimes \Lambda^{r-q}T^\star M$ for all $0\leq q\leq r$, which is much bigger than the exceptional tangent bundle. Similar issues occur for IIB because of the $SL(2)$-doublet of 5-forms in the $E_6$ tangent bundle. 

One can view this apparent obstruction in two ways. The optimistic point of view is that given the relation of the constructions we were able to write down to M5 and D3 branes, it must be the case that the topological field theories canonically-associated to the larger duality groups must be hitherto-unknown, exotic topological field theories of significant physical interest. The pessimistic point of view which is in particular relevant for the M-theory construction is that given that the $E_7$ construction will likely have a symplectic form of higher degree than 6, the corresponding topological field theory will be defined in 8 dimensions or higher, and there is no known physically-relevant brane, or at least no known brane that plays the same role as the M2- and M5-branes in M-theory. A possible candidate is the M9-brane \cite{Horava:1996ma,Horava:1995qa} i.e. the boundary of eleven-dimensional spacetime. A small hint that this makes sense is the fact that M5-branes can end on M9-branes \cite{Bergshoeff:2006bs}. This and other possibilities for branes are considered in \cite{Hull:1997kt,Bergshoeff:1998bs}. For type IIB the higher-dimensional D-branes are candidates but these tend to come in nontrivial $SL(2)$-multiplets \cite{Bergshoeff:2006gs} and it is not easy to see how the formalism in this paper can produce anything that is not an $SL(2)$-singlet.

One might wonder if the M2 brane fits into a dg-symplectic picture, given the distinguished role the M2, D3 and M5 branes play in string/M-theory. It (or rather its WZ coupling) should arise from a topological 3-brane. The M2 brane WZ coupling only involves an integral of the eleven-dimensional supergravity 3-form $C$ over the worldvolume, so clearly this case corresponds to the dg-symplectic manifold $\mathcal M=T^\star[3]T[1]M$ of degree 3. This is analysed in \cite{grutzmann2011h,Ikeda:2010vz} and more recently in \cite{Kokenyesi:2018ynq} where the connection to exceptional generalised geometry is pointed out. A function at degree $p-1=2$ corresponds to a section of $TM\oplus \Lambda^2 T^\star M$ i.e. a section of the untwisted $E_d$ exceptional generalised tangent bundle for $2\leq d \leq 4$. We did not analyse this case in detail on account of how similar it is to the string case. Along with the known physically-relevant $p=1,2$ cases we have therefore found interesting physics in dg-symplectic manifolds of all degrees up to 6 \emph{excluding} degree 5.

On the relation to brane physics: the M-theory construction is related to the M5-brane as unambiguously as one might possibly hope for, given that the field theory one obtains from the corresponding dg-symplectic manifold is in seven rather than six dimensions; the chiral gauge field on the M5-brane worldvolume emerges quite naturally from the seven-dimensional Chern-Simons term, and the seven-dimensional lagrangian we obtained directly is identical to one proposed by Kalkinnen and Stelle \cite{Kalkkinen:2002tk} based on careful examination of Dirac quantisation conditions for M-theory fluxes (and in turn closely related to the ``Hopf-WZ'' term of Intriligator \cite{Intriligator:2000eq}). One might wonder whether the full M5-brane lagrangian can be obtained from the topological field theory, like how the usual string sigma model can be obtained from the Courant sigma model on two-dimensional boundaries \cite{Severa:2016prq}. On the face of it one can fix a boundary condition for $\chi_\mu$ so that the 6-form $\chi_\mu dx^\mu$ equals the missing M5-brane Dirac-Born-Infeld term at least for certain metrics. This is somewhat strange though on account of the fact that the DBI term here is not necessarily to the worldvolume metric for which the chiral worldvolume field is (anti)self-dual, although it is conceivable that the master equation in the presence of a worldvolume boundary places further restrictions than the ones we found. A related issue is that the chiral field we obtained is always linearly self-dual whereas the chiral field on the M5-brane worldvolume is non-linearly self-dual for generic target backgrounds, but that does not seem like a fatal inconsistency because there exists a field redefinition in terms of a linearly self-dual field \cite{Sezgin:1998tm}.

For the relation to the D3-brane the situation is similar as far as obtaining the D3-brane DBI term is concerned. One difference compared to the M5-brane case is that the D3-brane WZ term we obtained omits the axion coupling $C_0 (B+dV)^2$. While this is certainly mathematically consistent (this coupling is gauge invariant independently of the rest of the Wess-Zumino term), it is rather surprising this is ``missed'' by the current approach, given that the presence of the axion coupling can be deduced by T-duality from type IIA \cite{Green:1996bh}: since the point of the exceptional tangent bundle is that there is a linear action of a U-duality group (as detailed in Table \ref{udualitytable}) --- and so in particular of the T-duality subgroup --- it is rather bizarre the theory does not ``know'' of that term.

A possible resolution of that conundrum is that the constructions we have written down, while motivated by U-duality-covariance, do not seem to possess it manifestly. The same is true of EGG by virtue of the fact that one works with $M$ as opposed to the extended spacetime $\tilde M$ of EFT, however in the dg-symplectic manifold context one cannot even do such things as e.g. write generalised tensors in arbitrary representations of the duality group as functions on $\mathcal M$ (but do recall we get the representations $R_n$ in the tensor hierarchy up to $\bar R_1$). This is also true to some extent in the dg-symplectic manifold construction for ordinary ($O(d,d)$) generalised geometry. Writing $e_M=(\psi^\mu,\chi_\mu)$ for the coordinates at degree 1 in that construction, one can only write down tensors in $\Lambda^r (TM\oplus T^\star M)$ in the space of functions at degree $r$. Similarly we can write tensors in $\Lambda^r E$ as functions in degree $5r$ in the M-theory construction and $3r$ in the type IIB construction. A difference to the $O(d,d)$ generalised geometry construction is that $TM\oplus T^\star M$ is self-dual so one effectively also has $\Lambda^r E^\star$, or, what amounts to the same thing, the $O(d,d)$-structure is encoded in the graded Poisson bracket at degree 1. For the constructions in this paper the $E_d$-invariant tensors are harder to see except for the $E_5={\rm Spin}(5,5)$ quadratic invariant, which is simply the graded Poisson bracket in degree $4$.

A point related to the lack of manifest $E_d$-covariance is the precise relation to the $L_\infty$-algebra of EFT as formulated in  the very recent work \cite{Cederwall:2018aab} by Cederwall and Palmkvist. The obvious differences are firstly that the $L_\infty$-algebra in that reference is much bigger (both in terms of length and of the size of the $\mathbb L_\ell$ modules), and secondly that all modules therein are $E_d$-multiplets. These differences arise from the ``ancillary ghosts'' necessary for $E_d$-covariance: the non-ancillary line $q=0$ in \cite{Cederwall:2018aab} Table 1 matches up precisely with our result, up to $\bar R_1$ (for our M-theory construction) and $\bar R_2$ (for our type IIB construction), beyond which our modules no longer form $E_d$ multiplets. The expectation is that the $L_\infty$-algebra of \cite{Cederwall:2018aab} matches up precisely with the one in this paper once a choice of M-theory or type IIB section is made. It is furthermore interesting to speculate to what extent our algebra of functions $C^\infty(\mathcal M)$ is related to the Borcherds superalgebra of \cite{Cederwall:2018aab} from which their $L_\infty$-algebra was derived. One commonality is that both the Poisson bracket here and the Lie algebra bracket in the Borcherds superalgebra coincide with the tensor hierarchy bullet product $\bullet$ for those levels in the $L_\infty$-algebras which agree with each other (modulo ``ancillary''). However the Poisson bracket generically fails to be ultralocal whenever functions at $C_{n\geq p}^\infty(\mathcal M)$ are involved (because these can involve $p_\mu$).

\bigskip

So far we mostly discussed generalisations to larger duality groups. An entirely orthogonal class of generalisations is motivated thusly: \emph{given that the constructions in this paper are directly analogous to \emph{exact} Courant algebroids (in their formulation as degree 2 dg-symplectic manifolds of the form $\mathcal M=T^\star[2]T[1] M$), how do non-exact Courant algebroids generalise?} An arbitrary Courant algebroid is simply a degree 2 dg-symplectic manifold \cite{Roytenberg:2002nu}, so one could simply consider arbitrary dg-symplectic manifolds of degrees 6 and 4 (generalising the M-theory and type IIB constructions respectively). We suspect that these generalisations are too wide and that extra conditions might be needed. For that reason consider the following very conservative generalisation of the M-theory construction
\be
\mathcal M=T^\star[6]T[1]M\times\mathbb{R}^n[3]\,,\quad \omega=dp_\mu dx^\mu- d\chi_\mu d\psi^\mu-\delta^{ij}d\zeta_i d\zeta_j\,,\qquad i=1,2\dots n\,.
\ee
The $\zeta^i$ bilinears define a ${\rm Spin}(n)$ subalgebra under the Poisson bracket. The hamiltonian $\Theta$ can now include a term $A^{ij}_\mu(x) \zeta_i\zeta_j \psi^\mu$ defining a ${\rm Spin}(n)$ connection on $M$ (flat when $(\Theta,\Theta)=0$). From the calculations of section \ref{section:branewzterms} it would appear we obtain a topological 6-brane theory that produces a multiplet of chiral forms $h_i$ (arising from $\zeta_i$) in the $\bf n$ of ${\rm Spin}(n)$ on 5-brane boundaries. It is probably too much to hope that this theory is directly related to the elusive nonabelian $(2,0)$ theory in six dimensions for gauge group ${\rm Spin}(n)$, but it could plausibly be an example of a theory for a nonabelian tensor multiplet, and even those are in short supply.

\bigskip

{\bf Acknowledgements:} {\small I am grateful for the interactions I had with Chris Hull, David Tennyson, and Daniel Waldram while writing this paper and also for helpful correspondence from Alec Barns-Graham, Chris Blair and especially Martin Cederwall and Jakob Palmkvist (with regards to their work \cite{Cederwall:2018aab}). I also had useful input from Urs Schreiber, Richard Szabo and Christian S\"amann. Finally (and belatedly) I would like to thank Charles Strickland-Constable for pointing out the error in the original \eqref{hamiltonianfromvectorfield}.

I am supported
by the EPSRC programme grant “New Geometric Structures from String Theory” (EP/K034456/1).}

\appendix
\section{Geometry on symplectic, graded supermanifolds}
\label{appendix:gradedsupermanifolds}

Let $\mathcal M$ be a graded supermanifold (see e.g. \cite{jozwikowski2016note} section 5): a supermanifold with globally-defined degree-counting vector field $\epsilon$ (also called the ``Euler'' vector field due to Euler's homogeneous function theorem) and transition functions preserving the degree. One can therefore always find homogeneous coordinates $z^a$: coordinates of definite degree $\deg z^a$,  which we will also denote as $a$. {\bf We assume the degree is correlated with supermanifold parity,} i.e.
\be
z^az^b=(-1)^{ab}z^b z^a\,.
\ee
This assumption reflects the fact we only consider bosonic theories in the physics sense (none of the fields at ghost number zero are parity-odd).

The space of functions on a graded supermanifold $\mathcal M$ splits into a sum of subspaces $C_n^\infty(\mathcal M)$ spanning all functions of degree $n$.

The graded supermanifolds we will explicitly consider are built using the degree-shifting construction from (graded) vector bundles. For instance, if $V$ is an ordinary vector bundle over an ordinary manifold $M$, $V[n]$ is the graded supermanifold given by shifting the degree of the fibre by $n$. Local homogeneous coordinates for $V[n]$ are therefore written $(x^\mu,v^I)$ where $x^\mu$ is a coordinate on $M$ (of degree zero) and $v^I$ is a coordinate on the fibre of degree $n$. In particular this implies $v^Iv^J=(-1)^n v^J v^I$. The degree shift can be used on graded vector bundles to obtain other ones, notably $T^\star[p]T[1]M$ with local coordinates $(x^\mu,\psi^\mu,\chi_\mu,p_\mu)$ of degrees $0,1,p-1$ and $p$ respectively. (This would be written $T^\star(TM[1])[p]$ in a more consistent notation.)

Unadorned derivatives $\partial_a$ are left derivatives, while $\partial^R_a$ are right derivatives:
\be
df=dz^a \partial_a f=\partial_a^R f dz^a
\ee
and vector fields $X=X^a \partial_a$ of definite degree $X$ act on functions $f,g\in C^\infty(\mathcal M)$ as ($\deg X:=\deg X^a-a,(-1)^f:=(-1)^{\deg f}$)
\begin{align}
X\cdot f&=X^a \partial_a f \\
X\cdot(fg)&=(X\cdot f) g + (-1)^{Xf} f (X\cdot g)\,.
\end{align}
One calculates that the graded commutator of vector fields is another vector field: $[X,Y]$ acts on functions as a derivation of degree $X+Y$ if we set
\be
\label{gradedcommutator}
[X,Y]\cdot f:= X\cdot(Y\cdot f)- (-1)^{XY} Y\cdot(X\cdot f)
\ee
and so $[Y,X]=(-1)^{1+XY}[X,Y]$. The space of vector fields is embedded inside a graded associative algebra under composition ($(XY)\cdot f:=X\cdot(Y\cdot f)$) and a short calculation gives the \emph{graded} Jacobi identity for any three vector fields $X_\alpha,X_\beta,X_\gamma$ of degrees $\alpha,\beta,\gamma$
\be
\label{superjacobi}
(-1)^{\alpha\gamma}[X_\alpha,[X_\beta,X_\gamma]] + (-1)^{\beta\alpha} [X_\beta,[X_\gamma,X_\alpha]] + (-1)^{\gamma\beta}[X_\gamma,[X_\alpha,X_\beta]]=0\,.
\ee
Equivalently,
\be
\label{superadjointrep}
{\rm ad}_{X_\alpha}[X_\beta,X_\gamma]= [{\rm ad}_{X_\alpha} X_\beta,X_\gamma]+(-1)^{\alpha\beta}[X_\beta, {\rm ad}_{X_\alpha}X_\gamma]\qquad ({\rm ad}_{X_\alpha} Y:=[X_\alpha,Y])\,.
\ee

We define differential forms as functions on $T[1]\mathcal M$ (the tangent bundle where the fibre coordinate $dz^a$ is declared to be of degree $\deg a+1$). The exterior derivative is the vector field $d$ of degree 1 on $T[1]\mathcal M$ acting on functions on $\mathcal M$ as above, and on the fibre as $d (dz^a)=0$.

We define the interior product or contraction $\iota_X$ as the degree $X-1$ vector field on $T[1]\mathcal M$ satisfying
\be
\iota_X dz^a =  X^a\,, \quad \iota_X z^a=0\implies \iota_X df= X \cdot f=X^a\partial_a f\,,
\ee
(this disagrees with e.g. \cite{Ikeda:2012pv,Roytenberg:2006qz}) and the Lie derivative by the magic formula
\be
\label{magic}
\mathcal L_X=[\iota_X,d]=\iota_X d +(-1)^X  d\iota_X
\ee
(where the commutator is the graded commutator). We calculate $\mathcal L_X f=X\cdot f$ for any function $f \in C^\infty(\mathcal M)$, so $\mathcal L_X$ is the correct extension of the Lie derivative to any form on $\mathcal M$: it is a degree $X+1-1=X$ vector field on $T[1]\mathcal M$. On vector fields we define $\mathcal L_X Y:=[X,Y]$.

The identity \eqref{superadjointrep} can then be used to prove the graded generalisations of all the usual identities since $\mathcal L_X, d, \iota_X$ are all vector fields on $T[1]\mathcal M$. For example, $[d,\mathcal L_X]=[d,[\iota_X,d]]=-[d,[\iota_X,d]]=0$. Slightly less trivially,
\be
[\mathcal L_X,\mathcal L_Y]=\mathcal L_{[X,Y]}
\ee
on all forms is proven by noting a) it is true by definition \eqref{gradedcommutator} on functions and b) it is true on all 1-forms $df$ since $[\mathcal L_X,\mathcal L_Y]$ moves past $d$. We will use the more obscure
\be
\label{miscmagic}
[\mathcal L_X,\iota_Y]=\iota_{[X,Y]}
\ee
later (trivially proved on 1-forms $df$).

In any system of homogeneous coordinates $z^a$ we have the following expression for the Euler vector field $\epsilon$:
\be
\epsilon=(\deg z^a)z^a\partial_a\,.
\ee
The degree of a function, vector field or differential form on $\mathcal M$ is simply the eigenvalue of the Lie derivative $\mathcal L_\epsilon$. $\epsilon$ itself has zero degree.
(Differential forms on $\mathcal M$ also have another degree associated to their interpretation as functions on $T[1]\mathcal M$. The difference between the two is form degree. So ``an $n$-form of degree $p$'' means $\epsilon$-degree $p$, total degree $n+p$.)

\subsection{Graded symplectic form and Poisson bracket}
Graded Poisson brackets $(\--,\--)$ of degree $-p$ are defined to satisfy the graded antisymmetry property
\be
\label{poissonantisymmetry}
(f,g)=(-1)^{1+(f+p)(g+p)} (g,f)
\ee
and the graded Jacobi identity
\be
\label{poissonjacobi}
(f,(g,h))= ((f,g),h)+ (-1)^{(f+p)(g+p)}(g,(f,h))
\ee
or equivalently
\be
(-1)^{(f+p)(h+p)}(f,(g,h)) + (-1)^{(g+p)(h+p)} (h,(f,g))+(-1)^{(f+p)(g+p)}(g,(h,f))=0\,.
\ee
These look strange due to factors of $p$ but are in fact equivalent to the usual super-Jacobi identity \eqref{superjacobi} for the corresponding hamiltonian vector fields; also, they define an ordinary graded Lie bracket on a ``degree-reflected'' space (see next subsection). For $p=-1$ the bracket is of degree $1$ and these identities characterise the BV antibracket.

A Poisson bracket of degree $-p$ is defined by a symplectic form of degree $p$: a closed 2-form of degree $p$ which is nondegenerate. We define hamiltonian vector fields $X_f$ and Poisson bracket $(f,g)$ for any two functions $f,g$ on $\mathcal M$ by
\be
\label{poissondef}
\iota_{X_f}\omega=(-1)^f df\,,\qquad (f,g)=\mathcal L_{X_f} g = X_f \cdot g\,.
\ee
Graded antisymmetry follows from (NB $\deg X_f=f-p$)
\be
(f,g)=\iota_{X_f}dg=(-1)^g \iota_{X_f}\iota_{X_g}\omega=(-1)^{1+(f+p)(g+p)} ((-1)^f\iota_{X_g}\iota_{X_f}\omega)
\ee
for which the extra sign in \eqref{poissondef} is crucial. The graded Jacobi identity in the form \eqref{poissonjacobi} is trivially verified by a) commuting the Lie derivatives in $(f,(g,h))=\mathcal L_{X_f} \mathcal L_{X_g} h$ using \eqref{superadjointrep} and b) using the identity
\be
\label{pbhomomorphism}
[X_f,X_g]=X_{(f,g)}\,.
\ee
This is proven using \eqref{miscmagic} as follows
\begin{align}
\iota_{[X_f,X_g]}\omega&=(\mathcal L_{X_f} \iota_{X_g} - (-1)^{X_f(X_g+1)} \iota_{X_g} \mathcal L_{X_f})\omega\\
&=(-1)^{f+p}d \iota_{X_f}\iota_{X_g}\omega\\
&=(-1)^{f+p+g}d\iota_{X_f} dg\\
&=(-1)^{f+g-p}d(f,g)=\iota_{X_{(f,g)}}\omega
\end{align}
recalling that the degree of $(f,g)$ is $f+g-p$.

We will use the fact symplectic forms of degree $p\neq 0$ are always exact. In  fact there is a canonical symplectic potential $\vartheta$ satisfying $-d\vartheta=\omega$, obtained using a trick involving the Euler vector field: since $\mathcal L_\epsilon \omega=p\omega=d(\iota_\epsilon\omega)$,
\be
\label{canonicalsymplecticpotential}
\vartheta=-\frac{1}{p}\iota_\epsilon\omega\,.
\ee
There is the following formula for the hamiltonian $f_X$ associated to any vector field $X$ (of degree $X\neq -p$) leaving the symplectic form $\omega$ of degree $p$ invariant:
\be
\label{hamiltonianfromvectorfield}
f_X=\frac{(-1)^p p}{X+p} \iota_X\vartheta\,,
\ee
proven using the Euler vector field (where $\vartheta$ is the canonical symplectic potential defined above).

We can now finally state
\begin{definition}
\label{dgsymplecticmanifold}
A \emph{dg-symplectic manifold} of degree $p\neq 0$ is a graded symplectic supermanifold $\mathcal M$ with symplectic form $\omega$ of degree $p$ and distinguished hamiltonian $\Theta\in C^\infty(\mathcal M)$ of degree $p+1$. Equivalently, it is a graded symplectic supermanifold with homological vector field $X$ ($\deg X=1, \mathcal L_X X=0$)  that preserves the symplectic form i.e. $\mathcal L_X\omega=0$.

We assume that there are no coordinates of negative degree on $\mathcal M$. With this assumption, dg-symplectic manifolds of degree $p$ are also known as: \emph{symplectic Lie $p$-algebroids} \cite{Ritter:2015ffa}, \emph{NPQ-manifolds} \cite{Alexandrov:1995kv}, \emph{$\Sigma_p$-manifolds} \cite{severa2001some}, \emph{$Qp$-manifolds} \cite{Gruetzmann:2014ica} and finally \emph{$L_\infty$-algebroids} \cite{Sati:2009ic}.
\end{definition}

\subsubsection{Infinitesimal automorphisms}
\label{automorphismsgradedPoisson}
We define an infinitesimal automorphism of a graded Poisson bracket as any vector field $X$ on $\mathcal M$ of degree zero deriving the Poisson bracket:
\begin{align}
X\cdot(f,g)=(X\cdot f,g)+(f,X \cdot g) && \forall f,g\in C^\infty(\mathcal M)\,.
\end{align}
This is the condition obtained by differentiating a graded Poisson algebra automorphism at the identity; accordingly, $\exp(X)$ defines a (finite) automorphism if e.g. $X$ is nilpotent.

For $p\neq 0,1$ it is trivial to prove that \emph{all infinitesimal automorphisms $X$ are inner: $X=(f_X,\--)$ for $f_X$ as given by \eqref{hamiltonianfromvectorfield}}. It follows from showing $\mathcal L_X \omega=0$. (Sketch: Write $\mathcal L_X(f,g)=[\mathcal L_X,\mathcal L_{X_f}]g+ \mathcal L_{X_f}\mathcal L_X g=\iota_{[X,X_f]}g+(f,X\cdot g)$, then massage the first term with \eqref{miscmagic}).

\subsubsection{Explicit coordinate expressions}
We write the symplectic form $\omega $ as
\be
\omega=\frac{1}{2}dz^a\omega_{ab} dz^b
\ee
from which follows the symmetry property
\be
\label{omegasymmetry}
\omega_{ba}=(-1)^{1+ab + p(a+b)}\omega_{ab}\,.
\ee
If we define the Poisson bivector $\omega^{ca}$ as $\omega^{ca}\omega_{ab}=\delta^c_b$ and calculate $\iota_X \omega= X^a \omega_{ab} dz^b$ for any vector field $X$, \eqref{poissondef} gives
\be
\boxed{X_f^a=(-1)^f \partial_c^R f \omega^{ca}\,,\quad (f,g)= (-1)^f \partial_a^R f \omega^{ab} \partial_b g\,.}
\ee
Therefore
\be
(z^a,z^b)=(-1)^{a}\omega^{ab}\,.
\ee
This last formula might appear inconsistent with the symmetry of the Poisson bracket; however that is because $\omega^{ab}$ \emph{does not} have the same symmetry as $\omega_{ab}$ but rather
\be
\omega^{ba}=(-1)^{1+ab+p(a+b) + (a+b+p)}\omega^{ab}\,.
\ee
We note however that the extra sign $(-1)^{a+b+p}$ vanishes in Darboux coordinates so this subtlety never comes up.

For the canonical symplectic potential $d\vartheta=-\omega$ obtained from the Euler vector field we get
\be
\vartheta=-\frac{1}{p} (\deg z^a)z^a \omega_{ab} dz^b\,.
\ee

All constructions considered in this paper have even symplectic forms and so the form of $\omega$  is near-universal. We use the Poisson brackets
\begin{align}
(x^\mu,p_\nu)=\delta^\mu_\nu=-(p_\nu,x^\mu)\,,\quad (\psi^\mu,\chi_\mu)=\delta^\mu_\nu=(\chi_\nu,\psi^\mu)\,,\\
(\zeta,\zeta)=1\,,\quad (\zeta_\alpha,\zeta_\beta)=\varepsilon_{\alpha\beta}\,.
\end{align}
whence using $(z^a,z^b)\omega_{bc}=(-1)^a \delta^a_c$ (where some or all of the $\zeta,\zeta_\alpha$ terms are absent depending on the case)
\be
\omega=dp_\mu dx^\mu+\frac{1}{2} d\zeta_\alpha \varepsilon^{\alpha\beta}d\zeta_\beta- d\chi_\mu d\psi^\mu-\frac{1}{2}d\zeta d\zeta
\ee
and
\begin{itemize}
	\item $\mathcal M=T^\star[2]T[1]M$ (exact Courant algebroid):
	\be
	\vartheta=-p_\mu dx^\mu+\frac{1}{2}(\psi^\mu d\chi_\mu + \chi_\mu d\psi^\mu)
	\ee
	\item $\mathcal M=T^\star[4]T[1]M\times \mathbb{R}^2[2]$ (type IIB construction in this paper):
	\be
	\vartheta=-p_\mu dx^\mu +\frac{1}{4}(\psi^\mu d\chi_\mu + 3 \chi_\mu d\psi^\mu)+\frac{1}{2}\zeta_\alpha \varepsilon^{\alpha\beta}d\zeta^\beta\qquad(\varepsilon_{\alpha\beta}\varepsilon^{\gamma\beta}=\delta^\gamma_\alpha)
	\ee
	\item $\mathcal M=T^\star[6]T[1]M\times \mathbb{R}[3]$ (M-theory construction in this paper):
	\be
	\vartheta=-p_\mu dx^\mu+\frac{1}{6}(\psi^\mu d\chi_\mu + 5 \chi_\mu d\psi^\mu)+\frac{1}{2}\zeta d \zeta
	\ee
\end{itemize}

\subsection{The associated differential graded Lie algebra and $L_\infty$-algebra}
\label{appendix:linfinity}
\begin{definition} A \emph{differential graded Lie algebra (dgLa)} is a graded Lie algebra $L$ with a differential $Q$ of degree $-1$ such that $Q[a,b]=[Qa,b]+(-1)^{\deg a}[a,Qb]$. (The Lie bracket $[\--.\--]$ is degree zero and graded-anticommutative.)
\end{definition}

For any dg-symplectic manifold $\mathcal M$ of degree $p$ as above (but not necessarily non-negatively graded), the space $C^\infty(\mathcal M)$ has a natural dgLa structure \emph{after a reflection in degree:} the space of functions of degree $n$ on $\mathcal M$ $C_n^\infty(\mathcal M)$ is identified with the subspace $L_{p-n}$ of degree $(p-n)$ in the dgLa $L$. The Poisson bracket is then identified with the Lie bracket and \eqref{poissonantisymmetry} and \eqref{poissonjacobi} express the graded antisymmetry and super Jacobi identities respectively. The Lie bracket is degree zero in the new grading since $[\--,\--]:L_m\otimes L_n\to L_{m+n} \leftrightarrow (\--,\--):C^\infty_{p-m}(\mathcal M)\otimes C^\infty_{p-m}(\mathcal M)\to C^\infty_{p-m-n}(\mathcal M)$. The differential $Q$ is identified with $X_\Theta$ which is degree -1 on $L$.

As was pointed out in \cite{Ritter:2015ffa},
\begin{center}
For any dg-symplectic manifold $\mathcal M$ one gets a canonically associated $L_\infty$-algebra structure on the space of functions of degrees $n<p$.
\end{center}
\noindent This follows from the result in the note \cite{getzler2010higher} (which is equivalent to the earlier \cite{fiorenza2007structures}) which associates an $L_\infty$-algebra to any dgLa $L$, constructed from the positively-graded subspace $\oplus L_{m>0}$. Here the dgLa where this comes from is the one associated to the dg-symplectic manifold $\mathcal M$ in the previous paragraph. If $\mathcal M$ is non-negatively graded as is the case for the constructions in this paper we in fact obtain an $L_p$-algebra (an $L_\infty$-algebra where $L_{p+n}$ is zero for $n>0$). Unfortunately a straightforward proof of the proposition which does not otherwise rely on brutal calculations seems to be unavailable.

To exhibit said $L_\infty$-algebra we first give the definition in the convention of \cite{getzler2010higher}:
\begin{definition} An $L_\infty$-algebra is a graded vector space $\mathbb L$ with $n$-ary graded symmetric linear operators $\{\dots \}$ (so $\{a_1,a_2,\dots a_n \}\in\mathbb L\, \forall a_1,a_2\dots a_n \in \mathbb L$), all of degree $-1$, satisfying the following \emph{$k$-th Jacobi rule} for all $k\geq 0$:
\begin{align}
\label{kjacobirule}
\sum_{n=0}^k \sum_{\pi\in S_n} \binom{k+1}{n+1} \frac{(-1)^{\varepsilon(\pi)}}{(k+1)!} \big\{\{a_{\pi_1},\dots a_{\pi_{n+1}} \}, a_{\pi_{n+2}},a_{\pi_{n+3}},\dots a_{\pi_{k+1}} \big\}=0
\end{align}
where $(-1)^{\varepsilon(\pi)}$ is the usual graded symmetric sign: $a_{\pi_1}a_{\pi_2}\dots a_{\pi_{k+1}}=(-1)^{\varepsilon(\pi)} a_1a_2\dots a_{k+1}$.

\end{definition}
In the above definition, and in most of the literature, $L_\infty$-algebras have no 0-ary bracket.

\begin{prop} \cite{getzler2010higher}
The $L_\infty$-algebra $\mathbb L$ associated to the dgLa $L$ with differential $Q$ is given by the graded vector space $\mathbb L=\oplus L_{m>0}$ with brackets
\begin{align}
\{a\}&= Qa \qquad(\text{or zero for $\deg a=1$})\\
\{a_1,\dots a_{n+1}\}&=\frac{(-1)^{n}}{n!} B_{n}\sum (-1)^\varepsilon \Big[\dots \big[ [Q a_{\pi_1}-\{a_{\pi_1}\},a_{\pi_2}],a_{\pi_3}\big]\dots, a_{\pi_{n+1}}\Big]
\end{align}
where the sum is over all permutations $\pi$ and $(-1)^\varepsilon$ is as in the previous sum; $B_{n\geq 0}$ are the Bernoulli numbers $1,-1/2,1/6,0,\dots$, vanishing for odd $i\geq 3$. (NB that $Qa-\{a\}$ is nonzero, but only for $\deg a=1$).
\label{linftyfromdgla}
\end{prop}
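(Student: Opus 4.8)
The statement is a restatement of Getzler's theorem, so the plan is to verify directly that the proposed unary, binary, \dots, $(n+1)$-ary brackets satisfy every $k$-th Jacobi rule \eqref{kjacobirule}. The one genuinely structural observation to make first is the role of the truncation. Because $Q$ has degree $-1$ it maps $L_1\to L_0$, which lies outside $\mathbb L=\oplus_{m>0}L_m$, so the honest differential on $\mathbb L$ is $\{\--\}$, equal to $Q$ above degree $1$ and set to zero on $L_1$. The ``lost'' piece $Qa\in L_0$ (for $a\in L_1$) is exactly what reappears as the seed $Qa_{\pi_1}-\{a_{\pi_1}\}$ of the iterated Lie brackets defining the higher $\ell_{n+1}$, this seed being nonzero precisely for the degree-$1$ arguments. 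Since $[L_0,L_m]\subseteq L_m$, those iterated brackets land back in $\mathbb L$ and carry the correct degree $-1$. Conceptually this is the adjoint (gauge) action of $L_0$ on $\mathbb L$, and the Bernoulli coefficients $B_n$ are forced because they are the Taylor coefficients of $x/(e^x-1)$, the generating function governing how that action exponentiates (equivalently, the power series appearing in the derivative of the exponential map on the gauge group).

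With that dictionary fixed I would organise the verification by the value of $k$. For $k=0$ the rule is $\{\{a\}\}=0$, immediate from $Q^2=0$ together with the truncation on $L_1$; for $k=1$ it states that $\{\--\}$ is a graded derivation of the binary bracket $\ell_2$, which follows from $Q^2=0$ and the fact that $Q$ is a derivation of $[\--,\--]$ in $L$. These low cases are worth doing by hand since they pin down the sign conventions and confirm the Courant-algebroid brackets recalled earlier in the excerpt. For general $k$ I would substitute the explicit formulas into \eqref{kjacobirule} and sort the resulting terms by the shape of the innermost bracketing and the number of arguments feeding into it. Applying $Q^2=0$, the graded Jacobi identity for $[\--,\--]$, and the derivation property of $Q$ collapses the tree-like expressions so that each surviving bracketing shape appears with a scalar coefficient built from products of binomial coefficients with Bernoulli numbers; the whole expression vanishes precisely because those coefficients reduce to the Bernoulli recursion $\sum_{j=0}^{n}\binom{n+1}{j}B_j=0$ for $n\geq1$, i.e. the functional equation $\big(\sum_m B_m x^m/m!\big)(e^x-1)=x$. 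This is the only non-formal input. One may alternatively deduce the entire structure from the Fiorenza--Manetti $L_\infty$-structure \cite{fiorenza2007structures} on the mapping cone of the dgLa map $Q:\tau_{\geq1}L\to\tau_{\leq0}L$, from which the same Bernoulli coefficients emerge for the same reason, recovering Getzler's formula \cite{getzler2010higher}.

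The main obstacle is not conceptual but combinatorial: tracking the graded-symmetric signs $(-1)^{\varepsilon(\pi)}$ through the graded-\emph{antisymmetric} Lie brackets and matching them against the Koszul signs implicit in \eqref{kjacobirule}, all while keeping the Bernoulli bookkeeping consistent across every arity simultaneously. This is exactly the ``brutal calculation'' referred to above, and it is the reason a short self-contained proof is unavailable; once the $L_0$-seed picture and the Bernoulli generating function are in hand, nothing else is more than a (large) sign-and-combinatorics check.
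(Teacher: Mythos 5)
First, a point of comparison you could not have known: the paper does not prove this proposition at all. It is stated as a quoted result of \cite{getzler2010higher} (noted in the text to be equivalent to the earlier \cite{fiorenza2007structures}), and the surrounding discussion explicitly remarks that ``a straightforward proof of the proposition which does not otherwise rely on brutal calculations seems to be unavailable.'' So your proposal attempts precisely the calculation the paper deliberately declines to carry out, and the question is whether your outline actually amounts to that calculation. The parts that can be checked are sound: the observation that the truncation forces $\{a\}=0$ on $L_1$ while the discarded piece $Qa\in L_0$ re-enters as the unique nonzero seed $Qa-\{a\}$ of the higher brackets is correct; the degree bookkeeping ($[L_0,L_m]\subseteq L_m$, all brackets of degree $-1$) is correct; the $k=0$ and $k=1$ Jacobi rules do follow from $Q^2=0$, the truncation, and the derivation property exactly as you say; and $x/(e^x-1)$ is the right conceptual origin of the $B_n$.

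The gap is the step you label ``general $k$''. Asserting that, after applying $Q^2=0$, the graded Jacobi identity and the derivation property, ``each surviving bracketing shape appears with a scalar coefficient'' which vanishes ``precisely because of the Bernoulli recursion'' is a restatement of what must be proved, not a proof. Concretely: in \eqref{kjacobirule} the inner bracket $X=\{a_{\pi_1},\dots,a_{\pi_{n+1}}\}$ is fed into an outer bracket, so one must compute both its seed $QX-\{X\}$ (which, via the derivation property, scatters $Q$ to arbitrary depth inside the nested dgLa brackets) and the terms where $X$ enters as a non-seed argument, which are not left-normed and only become comparable to your ``shapes'' after repeated Jacobi rearrangements that mix shapes and permutation signs; the coefficients to be cancelled are then products of two layers of factors $\binom{k+1}{n+1}/(k+1)!$ and $(-1)^nB_n/n!$, and showing they conspire to zero requires a genuine inductive or generating-function argument --- this is the actual content of Getzler's proof, not a routine sign check. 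Separately, your fallback route is misstated: $Q:\tau_{\geq1}L\to\tau_{\leq0}L$ is not a morphism of dgLas (it has degree $-1$, is a derivation rather than a bracket homomorphism, and does not even map $L_2$ into $\tau_{\leq0}L$), so Fiorenza--Manetti cannot be applied to it as written; the relevant input for \cite{fiorenza2007structures} is a genuine dgLa morphism, e.g.\ the inclusion of the sub-dgLa $\tau_{\leq0}L\hookrightarrow L$, whose mapping cone is quasi-isomorphic to $\oplus_{m>0}L_m$. With that repair your second route collapses to the citation the paper itself settles for, which is the honest status of this statement.
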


For the $L_\infty$-algebra $\mathbb L$ associated to a dg-symplectic manifold of degree $p$ this simply says that the $n$-ary brackets always take the form $\Big(\dots\big((\Theta,A),f_1\big)\dots ,f_{n-1}\Big)$ where $A$ is a function of degree $p-1$. In the context of the exceptional generalised geometry constructions of this paper such expressions are interpreted as bullet products of the form $L_A f_1\bullet f_2\bullet \dots \bullet f_{n-1}$, where $L_A f_1$ is the generalised Lie derivative of $f_1$; all elements of such $\mathbb L$ live in the tensor hierarchy.

\section{AKSZ topological field theory}
\label{appendix:aksz}
The AKSZ construction \cite{Alexandrov:1995kv} associates to a graded symplectic supermanifold $\mathcal M$ of degree $p$ as in the previous section a field theory describing a \emph{closed} topological $p$-brane embedded in $\mathcal M$. One constructs a solution $S$ to the Batalin-Vilkovisky master equation
\be
(S,S)_{\rm BV}=0
\ee
where $(\--,\--)_{\rm BV}$ is the \emph{antibracket}, a certain graded Poisson bracket of degree $-1$. Both $S$ and the antibracket arise from structures on $\mathcal M$ and a certain supermanifold $\mathcal N$ whose body is the $p$-brane worldvolume $\Sigma$ (an ordinary manifold) as we review below.

We will only treat the usual case $\mathcal N=T[1]\Sigma$. Functions $\rho$ on $T[1]\Sigma$ are identified with polyforms on $N$ and can be integrated using the usual Berezin integral --- which we will abbreviate as $\int_{T[1]\Sigma}$ --- that picks out the top form component:
\begin{align}
\int_{T[1]\Sigma}\rho:&= \int_\Sigma d^{p+1}\sigma  \, \frac{1}{(p+1)!} \rho(\sigma)_{\alpha_1\dots \alpha_{p+1}} \theta^{\alpha_1}\theta^{\alpha_2}\cdots \theta^{\alpha_{p+1}}\, d\theta^{{p+1}}\cdots d \theta^{1}\\
&= \int_\Sigma d^{p+1}\sigma  \, \frac{1}{(p+1)!} \rho(\sigma)_{\alpha_1\dots \alpha_{p+1}} \varepsilon^{\alpha_1\dots \alpha_{p+1}}
\end{align}
where $\sigma^\alpha$ are coordinates on $\Sigma$ and $\theta^\alpha$ are coordinates on the fibre of $T[1]\Sigma$. The exterior derivative on $\Sigma$ is thus $d=\theta^\alpha\partial_\alpha$ (we identify $d\sigma^\alpha=\theta^\alpha$).

To find the fields of the AKSZ model, take each local homogeneous coordinate $z^a$ on $\mathcal M$ and promote it to a form $z^a(\sigma)$ on $\Sigma$ of rank equal to its degree. The ghost/antifield sector is obtained by promoting $z^a$ further to what should be thought of as a ``BV superfield'' $z^a(\sigma,\theta)$ of ghost number equal to the degree of $z^a$ as a coordinate on $\mathcal M$, while declaring $\theta^\alpha$ to have ghost number $+1$. Thus for any function $f \in C^\infty(\mathcal M)$ we have a corresponding function on $T[1]\Sigma$:
\be
f(\sigma,\theta)=f^0(\sigma) +f^1_\alpha(\sigma) \theta^\alpha +\dots + f^{\deg f}_{\alpha_1\dots\alpha_{\deg f}}(\sigma)\theta^{\alpha_1} \cdots \theta^{\alpha_{\deg f}}+\dots
\ee
where $f^0$ is a function on $\Sigma$ of ghost number $\deg f$, $f^1_\alpha$ corresponds to a 1-form on $\Sigma$ of ghost number $\deg f-1$, $f^{\deg f}_{\alpha_1\dots\alpha_{\deg f}}$ are the components of a $(\deg f)$-form on $\Sigma$ of ghost number zero, etc. \footnote{The promotion of $f$ to a BV superfield involving the coefficient functions $f^i$ for $i\neq \deg f$ (which carry intrinsic nonzero ghost number) might seem strange. The reason is explained in the proof of Proposition 2.8 of \cite{Roytenberg:2006qz} which defines the space ${\rm Maps}(T[1]\Sigma,\mathcal M)$ which is the correct definition of the space of fields. In short: if one wants ${\rm Maps}(\{\text{point}\},\mathcal M)$ to be the same as $\mathcal M$ one needs the ghosts.}

For any form on the target space $\mathcal M$ one obtains a form on the space of fields ${\rm Maps}(T[1] \Sigma,\mathcal M)$ (at least if certain analytic assumptions are invoked; we will ignore such issues as is customary in this context). This is called the ``transgression'' procedure in the mathematical literature on the AKSZ sigma model whereas in physics it has been used before (see e.g. \cite{Polchinski:1985zf}) with no name at all. For the symplectic form $\omega=dz^a\omega_{ab}dz^b/2$ on $\mathcal M$ we get
\be
\Omega:=\int_{T[1]\Sigma} \, \frac{1}{2}\delta z^a \omega_{ab} \delta z^b
\ee
where $\delta$ is the (left) exterior derivative on the space of fields, i.e. for any local functional $F[z]$ in terms of the left functional derivative
\be
\delta F= \int_{T[1]\Sigma} \, \delta z^a \frac{\delta F}{\delta z^a}\,.
\ee
Clearly $\Omega$ is of degree $p-(p+1)=-1$ (as a 2-form on ${\rm Maps}(T[1] \Sigma,\mathcal M)$, where the degree is by ghost number) and thus the corresponding Poisson bivector yields an antibracket. It gives the antibracket $(\--,\--)_{BV}$. Then one can use the formulas in the previous subsection to derive and prove properties of the antibracket (NB we defined the Berezin integral so the $d\theta$ act from the right; there are no funny signs). For example the antibracket of two local functionals $F,G$ is
\be
\label{antibracket}
(F,G)_{\rm BV}:=\mathcal L_{X_F} G=\int_{T[1]\Sigma} (-1)^F \frac{\delta^R F}{\delta z^a}\omega^{ab}\frac{\delta G}{\delta z^b}\,.
\ee

The BV master action for the AKSZ sigma model is simply a sum of hamiltonians: the hamiltonian generating the action of the $\Sigma$ exterior derivative $d$ and the hamiltonian $\Theta$ on target space $\mathcal M$, both acting on superfields (embeddings of $T[1]\Sigma$ in $\mathcal M$). The latter term is easy to write down and is simply $\int_{T[1]\Sigma}(-1)^{p+1} \Theta$. The former can be worked out from formula \eqref{hamiltonianfromvectorfield} keeping in mind that $d$ as a vector field on ${\rm Maps}(T[1]\Sigma,\mathcal M)$\footnote{Of course $\mathcal M$ also has an exterior derivative which we also denoted $d$. Hopefully it is clear which is which.} is defined by $\mathcal L_d F=\iota_d \delta F=\int_{T[1]\Sigma} dz^a \delta F/\delta z^a$ and we get 
\be
\int_{T[1]\Sigma} (-1)^{p} \vartheta_adz^a =\int_{T[1]\Sigma} (-1)^{ap} dz^a \vartheta_a \qquad (\vartheta=\vartheta_a dz^a \,, \omega=-d\vartheta \text{ on $\mathcal M$})
\ee
Check: this expression is of degree zero so the hamiltonian vector field $(\int(-1)^{ap}  dz^a \vartheta_a,\--)_{BV}$ is indeed of degree 1. To verify we first need the bracket
\be
(z^a,F)_{\rm BV}=(-1)^a \omega^{ab} \frac{\delta F}{\delta z^b}
\ee
for any local functional $F[z]$. Now let $F$ have ghost number $1\mod 2$ so $(\--,F)_{\rm BV}$ doesn't pick up minuses when moved past things\footnote{This is in fact without loss of generality for this calculation, as we can simply hit an $F$ of ghost number zero $\mod 2$ with a formal constant of ghost number $1$.}. Then
\begin{align}
\left(\int_{T[1]\Sigma} (-1)^{ap} dz^a \vartheta_a,F\right)_{\rm BV}&=\int_{T[1]\Sigma} (-1)^{ap} ( dz^a (\vartheta_a,F)_{\rm BV} + d(z^a,F)_{\rm BV} \vartheta_a)\\
&=\int_{T[1]\Sigma}(-1)^{1+ap+a}(z^a,F)_{\rm BV}dz^b((-1)^{1+ab+p(a+b)}\partial_a\vartheta_b + \partial_b\vartheta_a)\\
&=\int_{T[1]\Sigma} (-1)^{ap+a}(z^a,F)_{\rm BV} dz^b \omega_{ba}\\
&=\int_{T[1]\Sigma} dz^a \frac{\delta F}{\delta z^a}=\mathcal L_d F
\end{align}
To get to the second line we used integration by parts (recall $\Sigma$ is assumed closed). The third line follows from the second using $d\vartheta=d(\vartheta_a dz^a)=-\omega=-dz^a \omega_{ab}dz^b/2$ on $\mathcal M$ (NB the symmetry \eqref{omegasymmetry} of $\omega_{ab}$ under $a\leftrightarrow b$).

Let us write down the BV master action for the AKSZ sigma model. Since the hamiltonian for $d$ is the one containing any derivatives one might as well call it the ``kinetic'' part $S_{kin}$ of the action, and call the hamiltonian for $(\Theta,\--)$ the ``potential'' part $S_{pot}$. The AKSZ master action is
\be
\label{AKSZ}
S=S_{kin}-(-1)^{p+1}S_{pot}\,,\qquad S_{kin}= (-1)^{p}\int_{T[1]\Sigma}  \vartheta_a dz^a\,,\quad S_{pot}=(-1)^{p+1}\int_{T[1]\Sigma} \Theta\,.
\ee
(We explain the relative coefficient choice).
We now check the master equation. The previous calculation proves $(S_{kin},z^a)_{\rm BV}=dz^a$ (NB $d$ here is on $\Sigma$!). Therefore
\begin{align}
(S,S)_{\rm BV}&=(S_{kin},S_{kin})_{\rm BV}& &-2(-1)^{p+1}(S_{kin},S_{pot})_{\rm BV}& &+(S_{pot},S_{pot})_{\rm BV}\\
&= \int_{T[1]\Sigma}(-1)^{ap+a+1} dz^a d\vartheta_a & &-2  d \Theta& &+ (\Theta,\Theta)
\end{align}
where the first two terms vanish because $\partial \Sigma$ is empty and the last term involves $(\Theta,\Theta)$ on $\mathcal M$ which is zero by assumption. The value of the relative coefficient between $S_{kin}$ and $S_{pot}$ is irrelevant for the purpose of solving the master equation.

With our choice of relative coefficient the equation of motion is
\be
dz^a=(\Theta,z^a)\,.
\ee
where the bracket on the right-hand side is the Poisson bracket on $\mathcal M$. (Calculation sketch: $(-1)^{p+1} \delta S=\int\delta z^b(\omega_{ba}dz^a-(-1)^{p+1}\partial_b \Theta)$ and $\omega^{ab}\partial_b\Theta=(-1)^a (z^a,\Theta)=(-1)^{p+1}(\Theta,z^a)$)
Therefore the solutions are ``dg maps'' between $T[1] \Sigma$ and $\mathcal M$.

\bibliography{NewBib}
\end{document}